\documentclass[12pt,nodisplayskipstretch]{article}
\usepackage[top=1.25in,bottom=1.25in,left=1.25in,right=1.25in]{geometry}
\usepackage[utf8]{inputenc} 
\usepackage{amsmath}
\usepackage{amssymb}
\usepackage{amsthm}
\usepackage{bbm}
\usepackage{xcolor}
\usepackage{graphicx}
\usepackage{float}
\usepackage{enumerate}
\usepackage{natbib}
\usepackage{setspace,lipsum}
\usepackage{bm}
\usepackage{booktabs}

\usepackage{mathrsfs}
\usepackage{comment}
\usepackage{apptools}
\usepackage{url}
\newcommand{\RN}[1]{  \textup{\uppercase\expandafter{\romannumeral#1}}}

\usepackage{booktabs}

\usepackage[subrefformat=parens]{subcaption}
\usepackage{mathtools}
\usepackage{multirow}

\theoremstyle{remark}
\theoremstyle{remark}

\theoremstyle{definition}
\newtheorem{defn}{\protect\definitionname}
\newtheorem{lem}{\protect\lemmaname}
\theoremstyle{plain}
\newtheorem{cor}{\protect\corollaryname}
\theoremstyle{plain}
\newtheorem{prop}{\protect\propositionname}
\theoremstyle{plain}

\theoremstyle{plain}
\newtheorem{thm}{\protect\theoremname}
\theoremstyle{definition}
\newtheorem{example}{\protect\examplename}

\providecommand{\examplename}{Example}
\providecommand{\axiomname}{Axiom}
\providecommand{\corollaryname}{Corollary}
\providecommand{\definitionname}{Definition}
\providecommand{\examplename}{Example}
\providecommand{\lemmaname}{Lemma}
\providecommand{\propositionname}{Proposition}
\providecommand{\theoremname}{Theorem}

\usepackage{etoolbox}
\usepackage[table]{xcolor}

\makeatletter
\patchcmd{\@maketitle}
  {\ifx\@empty\@dedicatory}
  {\ifx\@empty\@date \else {\vskip3ex \centering\footnotesize\@date\par\vskip1ex}\fi
   \ifx\@empty\@dedicatory}
  {}{}
\patchcmd{\@adminfootnotes}
  {\ifx\@empty\@date\else \@footnotetext{\@setdate}\fi}
  {}{}{}
\makeatother

\providecommand{\definitionname}{Definition}

\AtAppendix{
\numberwithin{definition}{section}
\numberwithin{theorem}{section}
\numberwithin{proposition}{section}
\numberwithin{lemma}{section}
\numberwithin{corollary}{section}}

\usepackage{enumitem}
\setlist[enumerate]{itemsep=0pt, topsep=5pt}

\title{Creative Ownership in the Age of AI}
\author{Annie Liang\footnote{Department of Economics, Northwestern} \quad \quad \quad Jay Lu\footnote{Department of Economics, UCLA}}

\begin{document}

\maketitle

\begin{abstract}
    Copyright law focuses on whether a new work is ``substantially similar'' to an existing one, but generative AI can closely imitate style without copying content, a capability now central to ongoing litigation. We argue that existing definitions of infringement are ill-suited to this setting and propose a new criterion: a generative AI output infringes on an existing work if it could not have been generated without that work in its training corpus. To operationalize this definition, we model  generative systems as closure operators mapping a corpus of existing works  to an output of new works. AI generated outputs are \emph{permissible} if they do not infringe on any existing work according to our criterion. Our results characterize structural properties of permissible generation and reveal a sharp asymptotic dichotomy: when the process of organic creations is light-tailed, dependence on individual works eventually vanishes, so that regulation imposes no limits on AI generation; with heavy-tailed creations, regulation can be persistently constraining.
\end{abstract}

\newpage

\section{Introduction}

Generative artificial intelligence systems are increasingly used to produce creative content including novels \citep{Brodsky2024aibusiness}, comics \citep{Andersen2022AltRightAI,Schor2024AndersenStability}, and even entire digital persons \citep{Heritage2025TillyNorwood}.  Trained on large body of existing works, systems can generate creations that resemble the tone or structure of particular authors, or images that evoke specific artistic traditions, without reproducing any identifiable work. For example, the following text (produced by ChatGPT-5.2) does not appear in Cormac McCarthy's \emph{The Road}, but is instantly recognizable as an imitation:
\begin{quote}
    They reached a bridge by late morning, the creek beneath it sliding past, carrying nothing that could be named. She stopped and stood there for a time, listening, though there was nothing to hear. The boy asked if they were still going the right way. She said yes. She said it because the word still mattered. They crossed and kept walking.
\end{quote}
Under standard copyright doctrine, which requires ``substantial similarity,'' such outputs typically do not constitute infringement. Yet the output may critically depend on the author's work, in the sense that it could not have been generated without their contributions to the training corpus. A wave of recent litigation---including \emph{Andersen v.\ Stability AI} and \emph{The New York Times v.\ OpenAI}---reflects growing discontent with existing notions of creative ownership.

This paper makes three primary contributions. First, we propose a new criterion for evaluating infringement on an existing work. Second, we characterize the structure of the permissible set of  generations. Finally, we characterize the evolution of permissibility as the corpus of works grows large, thus studying the restrictiveness of regulation on generative AI in saturated markets.

In our framework (described in Section \ref{sec:Framework}), a \emph{creation} is modeled as a point in $\mathbb{R}^d$, which might be a novel, painting, or even artistic character. A \emph{generator} maps a corpus of existing creations to a set of generable outputs. Formally, a generator is a closure operator: it satisfies preservation (existing creations remain generable), monotonicity (larger corpora expand the set of generable outputs), and idempotence (applying the generator to its own output produces nothing new). This abstraction encompasses various generative processes, from interpolation to feature recombination, while remaining agnostic to how the generator operates internally.

Our definition of infringement is counterfactual: a generated output \emph{infringes on} an existing creation if and only if that output could not have been generated without this creation in the training corpus. The \emph{violation set} consists of those outputs that infringe on at least one existing creation, and the \emph{permissible set} is the complement. 

Section \ref{sec:Properties} establishes structural properties of the permissible set. It expands monotonically as the corpus grows and is closed under further generation: combining permissible outputs cannot produce a violation (Proposition \ref{prop:InternalStructure}). We provide a sufficient condition for non-emptiness based on the \emph{Radon number} from convex geometry (Corollary \ref{corr:radon}). We also show that comparative statics depend on the status of new creations: incorporating a previous violation strictly expands the permissible set, while incorporating an already-permissible creation leaves it unchanged (Proposition \ref{prop:CompStatics}).

Creative corpora are not static: new works are constantly being produced and added to the training data of generative models. Section \ref{sec:LongRun} turns to the evolution of the permissible set. Our main result characterizes how the permissible set evolves as the corpus grows large. We show that when new creations are drawn from a distribution with light tails---so that extreme realizations are exponentially rare---the ratio of permissible to generable creations converges to one almost surely (Theorem \ref{thm:asym}). Intuitively, the corpus becomes sufficiently rich that multiple generative paths exist to any output, and no single creation in the corpus remains essential. By contrast, under heavy-tailed innovation, even very large corpora may continue to contain works that are genuinely distinct.  As a result, a positive measure of outputs may remain violations indefinitely. Whether infringement risk vanishes or persists thus turns on whether creative production generates redundancy or is truly innovative.

These results relate to ongoing debates about generative AI and intellectual property. Standard infringement tests ask whether an output \emph{resembles} a protected creation; our counterfactual criterion asks whether an output \emph{relies} on that creation. The two can diverge: an output may be a violation without copying its source, and an output may closely resemble a creation yet remain permissible if alternative paths to generation exist.

\subsection{Legal and Technological Context} \label{sec:Law}

\subsubsection{Existing Copyright Law and Implications for Generative AI}

Existing copyright law was not developed with generative artificial intelligence in mind, but its core rules nonetheless shape how generative systems are treated today. In most jurisdictions, copyright gives authors exclusive rights over reproduction, distribution, and the creation of derivative works (\citealp{USC106}). These rights potentially affect generative AI at two stages: model training and model output.

At the training stage, generative models are typically trained on large datasets that may include copyrighted works. Under traditional copyright principles, the ``copying'' of copyrighted works into data can constitute infringement unless it falls within an exception or limitation. In the United States, this question is largely governed by the fair use doctrine (\citealp{USC107}). Courts applying fair use consider factors such as whether the use is transformative, whether it is commercial, and whether it harms existing or potential markets for the original works (\citealp{Campbell1994, AuthorsGuildGoogle2015}). In prior cases, courts have found large-scale copying to be fair use when the copied material is not meaningfully disclosed to the public and the copying serves a distinct analytical function, such as search or indexing (\citealp{Leval1990, AuthorsGuildGoogle2015}).

While global policy responses to generative AI have largely focused on the input stage, there are arguments in favor of redirecting regulatory attention to outputs (\citealp{zhang_input_2025}). At this stage, infringement analysis follows more familiar rules. Copyright protects expression, not ideas, facts, or styles. An AI-generated output therefore infringes only if it is ``substantially similar" to protected elements of a copyrighted work (\citealp{Arnstein1946}). Most generative outputs will not meet this standard, since they are produced through statistical recombination rather than direct copying. That said, large models can sometimes memorize and reproduce portions of their training data, making output-based infringement a practical risk rather than a purely hypothetical one (\citealp{Carlini2021, Carlini2023, Nasr2023}).

Taken together, existing copyright law does not treat generative AI as inherently unlawful. Instead, legal exposure depends on two main questions: how model training is evaluated under fair use or similar doctrines, and whether generated outputs cross the line from general inspiration into protected expression.

\subsubsection{Criticisms and Ongoing Debates}

A central criticism of applying existing copyright law to generative AI concerns the distinction between style and expression. Copyright law has long drawn a firm line between protectable expression and unprotectable ideas, methods, systems, or artistic styles (17 U.S.C.\ §102(b); \citealp{Baker1879, Nichols1930}). As a result, copying an author's or artist's style---without copying protected expressive elements---does not amount to infringement, even when that style is distinctive or commercially valuable (\citealp{Satava2003, Sag2024}). This boundary reflects a longstanding policy choice to preserve creative freedom, competition, and follow-on innovation (\citealp{Leval1990}).

Generative AI systems, however, are particularly well-suited to producing outputs that closely mimic stylistic features of  human creators while avoiding literal copying. Large language models can produce text that resembles an author's tone, pacing, or narrative structure, while image models can generate works that evoke an artist's distinctive visual approach. Under current doctrine, such outputs generally fall outside copyright protection so long as they do not reproduce identifiable expressive elements from specific works (\citealp{Sag2024, Mantegna2024}).

Critics argue that this legal distinction fits poorly with the capabilities of generative AI. In many creative fields, style is often the primary source of value. A legal framework that protects expression but not style may allow AI systems to generate close substitutes that reduce demand for human creators, even when no infringement occurs in the legal sense. As a result, copyright law may permit significant displacement while offering limited remedies to affected creators (\citealp{PasqualeSun2024, USCO2025a, deRassenfosse2024}).

Defenders of the existing framework respond that extending copyright protection to style would represent a sharp break from settled law. Granting exclusive rights over stylistic features risks blurring the distinction between ideas and expression and could give creators control over broad creative territories. Such an expansion would be difficult to define and enforce, and could inhibit follow-on creativity and innovation (\citealp{Leval1990, Sag2024}). From this view, the fact that many AI-generated outputs fall outside infringement doctrine reflects copyright’s intentionally limited scope rather than a flaw in the system.

This disagreement has fueled an ongoing debate over whether generative AI exposes a genuine gap in copyright protection or simply makes longstanding limits more visible. Some scholars argue that attempting to treat stylistic imitation as infringement would push copyright beyond its intended role (\citealp{Mantegna2024}). It is also highly subjective and whether a work meets the criterion of substantial similar relies often on expert testimony.  Others maintain that generative AI alters the scale, accuracy, and substitutability of stylistic imitation in ways that weaken the original rationale for denying protection to style, even if the formal legal categories remain unchanged (\citealp{PasqualeSun2024, deRassenfosse2024}).

\subsection{Related Literature}

Besides the legal literature discussed above, our paper contributes to a growing literature in economics on copyright and intellectual property in the context of generative AI (see \citet{Lutes2025AIcopyright} and \citet{deRassenfosse2024} for excellent surveys). Our paper in particular addresses the following problem, as described in \citet{deRassenfosse2024}:
\begin{quote}
``An essential issue in the copyright treatment of creative machines is the extent to which their outputs might be legally `derivative' of the copyrighted works used to train the machine."
\end{quote}
\noindent Our paper presents a novel  formalization of this concept, where a work is legally derivative of another if it could not have been generated without the original work in its training data.

Because large-scale generative models are recent, economic theory on their implications for intellectual property remains limited. Two early contributions are \citet{Gans2024} and \citet{yang2025generativeaicopyrightdynamic}, which study optimal copyright policy from different perspectives.

\citet{Gans2024} emphasizes bargaining frictions and contracting feasibility. He demonstrates an important difference between ``small'' models trained on identifiable, contractible corpora and ``large'' models trained on web-scale data for which ex ante licensing is infeasible. In the former case, copyright protection strengthens creators’ incentives and improves welfare; in the latter, the welfare effects are ambiguous, motivating an ex post liability regime that permits training but allows harmed rights-holders to seek compensation. 

 \citet{yang2025generativeaicopyrightdynamic} adopt a dynamic perspective and study the interaction between two policy instruments: the fair-use standard governing training data and the copyrightability of AI-generated outputs. In their model, these policies jointly shape creators’ incentives to supply new data and firms’ incentives to invest in model quality. A key insight is that generous fair use can be welfare-improving when training data are abundant but counterproductive when new human-created data are scarce.

Our focus is complementary but distinct. We study the more basic question of whether the previous notions of copyright are appropriate in a setting with generative AI, and how it might alternatively be defined. We view our proposed criterion as a building block that can be embedded into broader economic models of licensing, litigation, or regulation.

Finally, Theorem \ref{thm:asym} connects to a broader literature on the nature of creative processes. Several papers have found that dividends to creative work---such as revenue from films, citations of academic papers, or attention to artworks---follow heavy-tailed distributions, with a small number of successes accounting for the bulk of observed outcomes \citep{rosen1981superstars, chung1994stochastic, devany1999uncertainty}. Like us, \citet{drugov2020tournament}, considers the process of innovation itself, and shows that whether its distribution is light- or heavy-tailed has sharp implications for the optimal design of contests and incentives. More broadly, several theories explain why creative processes may exhibit heavy tails. \citet{weitzman1998recombinant} models idea generation as a process of recombinant expansion that can generate growth exceeding exponential rates, while \citet{Lutes2025AIcopyright} argues that the use of AI as a tool in human creation may itself amplify variance in creative output.

\section{Model} \label{sec:Framework}

\subsection{Setup}

A \emph{creation} is a vector $c\in\mathbb{R}^{d}$.
A \textit{corpus} is any (Borel measurable) subset of creations $C\subset\mathbb{R}^{d}$. Examples include:

\begin{example}[Novels]
Large language models trained on corpora of published books can generate new books. A recent surge in AI-generated books led Amazon's Kindle store to impose limits on the number of books that could be self-published on the platform each week \citep{Brodsky2024aibusiness}.
\end{example}

\begin{example}[Cartoons]
Image generators such as Stable Diffusion are trained on datasets like LAION-5B, which contains billions of images scraped from the web. In \emph{Andersen v.\ Stability AI} (2023), artists allege that these models can reproduce their distinctive styles without authorization.
\end{example}

\begin{example}[Actors]
Companies such as Synthesia and HeyGen train models on video recordings of human performers to generate realistic digital avatars. The use of such likenesses—particularly of deceased or non-consenting actors—was a central issue in the 2023 SAG-AFTRA strike and subsequent negotiations over AI provisions in performer contracts \citep{SAGAFTRA2024AIWorkforce}. 
The controversy has intensified with the emergence of fully AI-generated performers such as AI talent studio Xicoia's actress Tilly Norwood \citep{Heritage2025TillyNorwood}.

\end{example}

A \emph{generator} is a technology for producing new creations from an existing corpus. Formally, it is a map from  the set of all possible corpora $\mathcal{C}$ into itself, which moreover satisfies three properties. 
\begin{defn} The map
$g:\mathcal{C}\rightarrow\mathcal{C}$ is a \textit{generator} if for all $C,D\in \mathcal{C}$ it satisfies:
\begin{enumerate}
\item Preservation: $C\subseteq g(C)$
\item Monotonicity: $C\subseteq D$ implies $g(C)\subseteq g(D)$
\item Idempotence: $g\left(g\left(C\right)\right)=g\left(C\right)$
\end{enumerate}
\end{defn}

\noindent We interpret $g(C)$ as the creations that are generable under $g$ when $g$ is trained on the works in $C$. A generator therefore (1) reproduces all input creations, (2) weakly expands the set of generable creations when the original corpus expands, and (3) is complete in that running it  on its output produces no new creations. Such maps are also known as closure or consequence operators. Below, we present several examples.

\begin{example}[Convex Hull Generator] \label{ex:CHex}
The AI firm generates creations by taking convex combinations of existing creations. We call the \emph{convex hull} generator 
\[
g_{conv}\left(C\right):=\text{conv}\left(C\right).
\]
which maps each $C$ to its convex hull.

For example, GPT-5.2 suggests the following as a convex combination of Shakespeare's \emph{Sonnet 18 (``Shall I compare thee to a summer's day?'')} and Emily Dickenson's \emph{Because I could not stop for death}:
\begin{quote}
I weigh the hours against the light, \\
They warm, then quietly recede; \\
What time bestows, it also bends, \\
Yet leaves a trace we read.
\end{quote}
Of this creation, GPT-5.2 writes: ``This poem softens Shakespeare’s rhetorical grandeur, relaxes Dickinson’s extreme compression, [and] blends temporal imagery with moderate regularity of rhythm. No single feature is taken wholesale from either poem; instead, tone, structure, and imagery lie between the two sources.''
\end{example}

\begin{example}[Splice Generator]
The AI firm generates creations by recombining features from existing
creations. We call this the \emph{splice} generator 
\[
g_{splice}\left(C\right):=\left\{ x\in\mathbb{R}^{d}\text{ : } \forall k=1,\dots, d, \mbox{ there is a } x' \in C \mbox{ s.t. } x_{k}=x'_{k}\right\} .
\]
For example, GPT-5.2 suggests the following as a splice of the poems mentioned in Example \ref{ex:CHex}:
\begin{quote}
The years—extend their measured grace—\\
As Seasons—pass me by—\\
So long as breath may utter praise—\\
I walk—yet do not die.
\end{quote}
Of this creation, GPT-5.2 writes: ``[the] diction and thematic structure (“years,” “seasons,” endurance) are drawn from Sonnet 18, [while] meter, punctuation, stanza shape, and syntactic compression are drawn from Dickinson.''
\end{example}

\begin{example}[Box Generator] The AI firm has access to both the convex hull generator and the splice generator. The \emph{box} generator composes these two maps: 
\[
g_{box}\left(D\right):=g_{conv}\left(g_{splice}\left(D\right)\right).
\]
For example, GPT-5.2 suggests the following as a box generation given the poems mentioned in Example \ref{ex:CHex}:
\begin{quote}
Time offers warmth, then slips away, \\
A season lent, not owned— \\
Yet something spoken, lightly held, \\
Outlasts the flesh and bone.
\end{quote}
Of this creation, GPT-5.2 writes: ``[this poem] inherits Shakespeare’s concern with endurance beyond time, [and] reflects Dickinson’s restraint and metaphysical inwardness, but smooths across multiple spliced variants to yield a coherent, unified style.''
\end{example}

The convex hull and 
box generators both have the property of outputting sets that are convex. They are thus  examples of what we will call convex-valued generators. 

\begin{defn}
A generator $g$ is \emph{convex-valued} if $g(C)$ is convex
for all $C\in\mathcal{C}$. 
\end{defn}

Convex-valuedness ensures that if two creations can be generated,
then so can any convex combination of them. This rules out gaps in
the space of outputs and reflects the continuity of most real generative
processes. For instance, AI systems that create digital actors or
composite images can smoothly interpolate among existing examples. We characterize convex-valued generators in Appendix \ref{app:ConvexValued}, and show in particular that composing the convex-hull generator with any other generator yields a convex-valued generator. Thus one can view the assumption of a convex-valued generator as restricting to settings where the convex hull generator
is available.

Our approach to modeling AI generation is intentionally broad and agnostic with respect to the specific algorithms employed in the process. In fact, our definition is not limited to generative AI but could equally apply to human creative processes as well. Moreover, the third property, idempotency, is to a certain degree without loss of generality. If the creative process $g$ is not idempotent, we can define a new operator $g' := g \circ g \circ \cdots$ that repeatedly iterates the original process. With diminishing returns under repeated generative iterations, $g'$ will approach idempotency. We can then treat $g'$ as the effective generator in our analysis.

\subsection{Permissible Generation} \label{sec:Infringement}

Our central definition of infringement specifies when a generated creation critically depends on the presence of some existing creation. To formalize this, we consider a counterfactual question: would the creation still be generable without that existing work in the corpus?

\begin{defn}
Fix a generator $g$ and corpus $C$. For each existing creation $c \in C$:
\begin{enumerate}
    \item The $c$-\emph{permissible set} is $p_{g}\left(c,C\right):=g\left(C\backslash \{c\}\right)$, i.e., all creations that are generable from $C \backslash \{c\}$.
    \item The $c$-\emph{violation set} is $v_{g}\left(c,C\right):=g\left(C\right)\backslash p_{g}\left(c,C\right)$, i.e., all creations that are generable from $C$ but not from $C\backslash \{c\}$.
\end{enumerate}
\end{defn}
Intuitively, the $c$-violation set consists of creations whose generability relies essentially on the presence of $c$ in the corpus. If a generated creation is a $c$-violation, then the author of $c$ has natural basis to claim infringement---the output would not exist without their contribution to the training data.

\begin{defn} Fix a generator $g$ and corpus $C$.
The \emph{permissible set} is 
\[
p_{g}\left(C\right):=\bigcap_{c\in C}p_{g}\left(c,C\right)
\]
The \emph{violation set} is 
\[
v_{g}\left(C\right):=\bigcup_{c\in C}v_{g}\left(c,C\right)
\]
\end{defn}

The permissible set consists of all creations that can be generated under $g$ from $C$, and that would still be generable were any specific existing creation to be removed from the corpus. All remaining generable creations lie in the violation set and depend essentially on some existing creation. 

Figure \ref{fig:violationCH} illustrates these sets for the convex hull generator.\footnote{For the convex hull generator, the permissible set corresponds to points with a Tukey depth of level 1.} Given existing works $c_1,c_2,\dots,c_5$, any work in the convex hull of these points is generable under $g_{conv}$. But in Panel (a), the generability of the creations in $V_{c_1}$ requires the presence of $c_1$ as an input to the generator, while the creations in $P_1$ do not. Panel (b) depicts the generable creations that are permissible, i.e., which do not depend on \emph{any} given existing work. The remaining generable creations are violations of some existing work. For the convex hull generator, the violation set is loosely an outer shell of the space of generable creations. In particular, no extreme points of $g(C)$ are permissible creations. This property is not however true for all generators; for example, Figure \ref{fig:violationBox} shows that the extreme point $c_4$ is permissible under the box generator.

In general, what the permissible set looks like depends on how the generator aggregates the underlying data and on what other existing works are present. Section \ref{sec:Properties} presents basic properties of the permissible set, and Section \ref{sec:LongRun} studies the long-run evolution of the permissible set as new creations are added to the corpus.

\begin{figure}[h]
\begin{center}
    \includegraphics[scale=0.25]{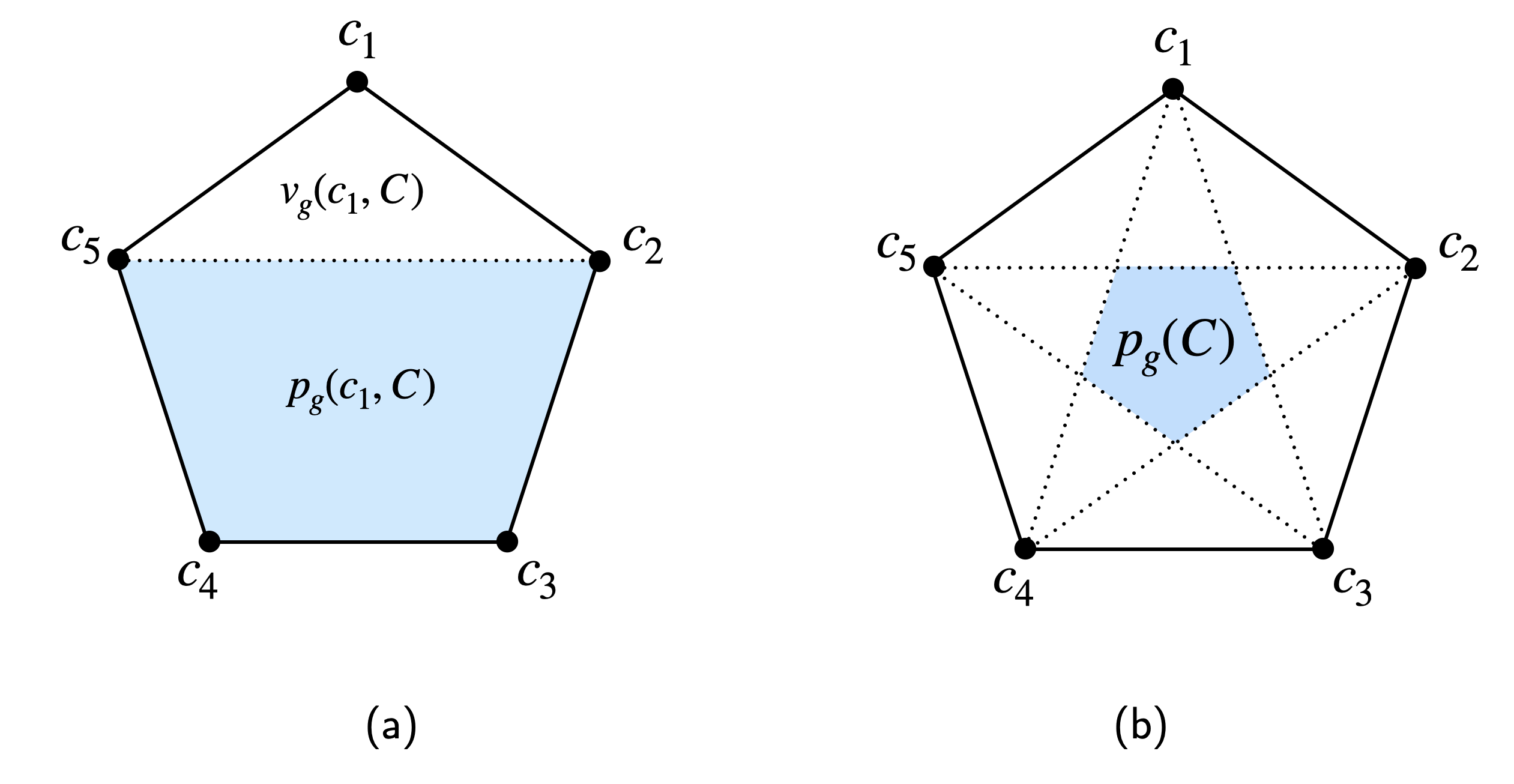}
    \caption{Consider the convex hull generator $g_{conv}$ and the corpus $C=\{c_1,c_2,\dots,c_5\}$. \emph{Panel (a):} $V_{c_1}$ is the set of $c_1$-violations (which are only constructible using $c_1$), and $P_{c_1}$ is its complement. \emph{Right:} $P = \bigcap_{c=1}^5 P_{c_i}$ is the set of points that are not $c_i$-violations for any $i=1,2,\dots,5$.} \label{fig:violationCH}
    \end{center}
\end{figure}

\begin{figure}[h]
\begin{center}
        \includegraphics[scale=0.25]{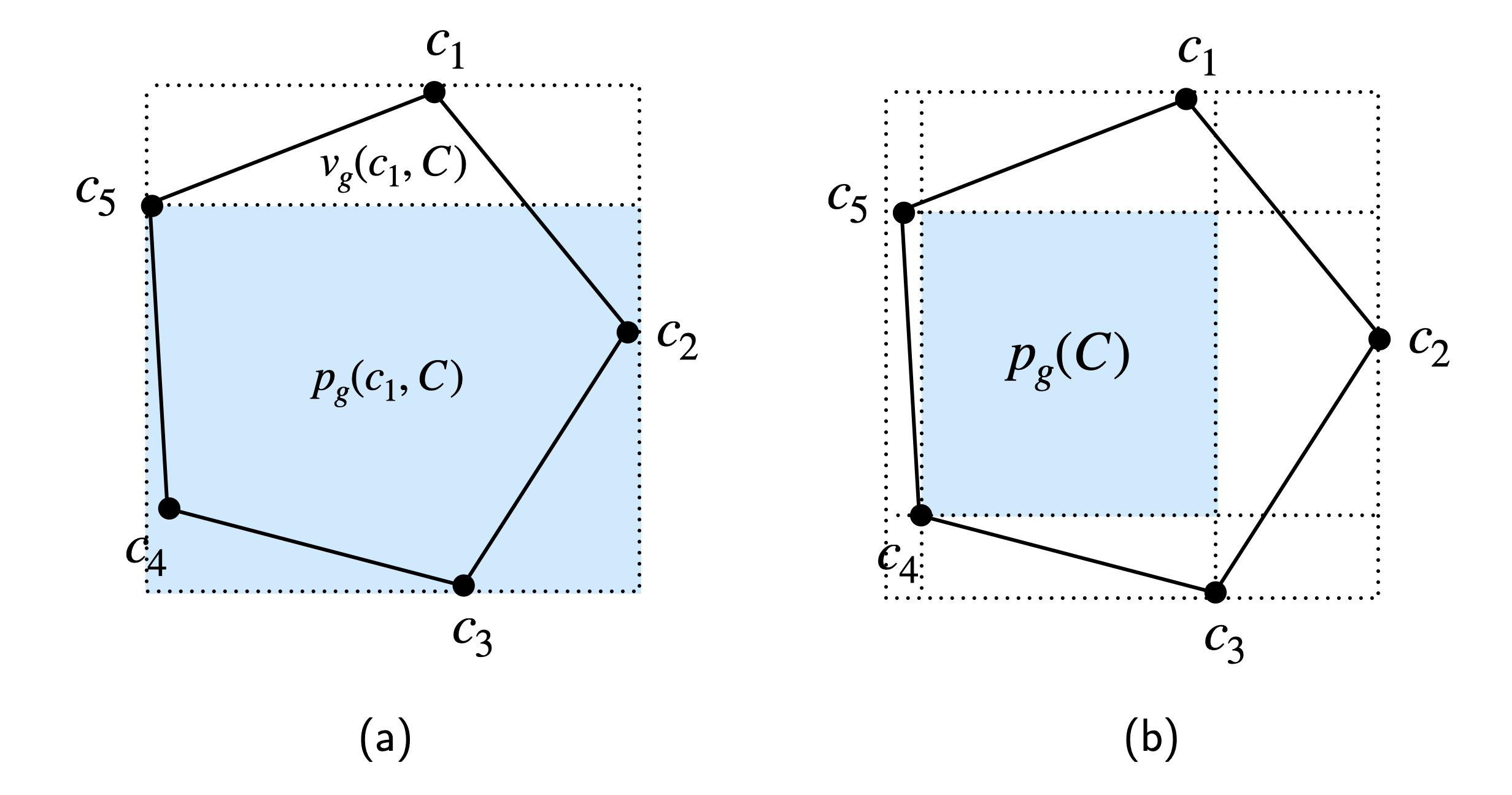}
    \caption{Replication of Figure \ref{fig:violationCH} for the box generator $g_{box}$.} \label{fig:violationBox}
    \end{center}
\end{figure}

\subsection{Discussion} \label{sec:Relationship}

As reviewed in Section \ref{sec:Law}, existing copyright doctrine applies to generative AI at two distinct stages. First, training may violate copyright law if copyrighted works are included in the training corpus. Second, generated outputs may violate copyright law if they are \emph{substantially similar} to an existing work. Our counterfactual definition departs from both of these standards. Below, we clarify how it relates to---and differs from---these prevailing legal approaches.

\paragraph{Restrictiveness of the criterion.}
Our definition can be either more or less restrictive than existing copyright standards. Relative to rules that prohibit training on copyrighted material altogether, our criterion is more restrictive in settings where AI systems are built by fine-tuning or composing existing models. For example, suppose a downstream generator can be written as $g_2 = g' \circ g_1$. Then even if the downstream firm does not itself train on copyrighted material, outputs of $g_2$ can still be violations if they depend essentially on some copyrighted work used in the training corpus of $g_1$. At the same time, our criterion does not automatically treat the presence of copyrighted works in the training corpus as problematic: inclusion is permitted so long as no generated output depends essentially on any particular copyrighted work.

Compared to the \emph{substantial similarity} standard applied to outputs, our definition again cuts in both directions. An output that is clearly not a copy of any existing work may nevertheless constitute a violation if its generation critically depends on a specific work in the corpus. Conversely, close resemblance to an existing work does not necessarily imply a violation if the output can be generated through multiple alternative paths that do not rely on that work.

Our criterion is intended as a complement rather than a substitute for existing copyright doctrines. Effective regulation may ultimately require the simultaneous enforcement of all three standards.

\paragraph{Implementation.}
Unlike ``substantial similarity,'' counterfactual generability has a clear meaning in principle: an output depends on a given work if it would no longer be producible once that work is removed from the training data. In the most direct implementation, this would involve retraining the model on \( C \setminus \{c\} \) and checking whether the output persists. While this brute-force method is usually computationally infeasible, a growing literature on \emph{machine unlearning} develops practical techniques for approximating the effect of removing individual datapoints from trained models \citep{DBLP:journals/corr/abs-1912-03817,Xiong2024MachineUnlearning}. Closely related are influence-function methods, which estimate the marginal contribution of a single training example to a model’s predictions without full retraining \citep{Koh2017Understanding}.

An alternative approach is to treat counterfactual generability as an evidentiary question and rely on qualitative proxies. For example, one could compare outputs across independently trained systems. If models trained without the plaintiff's work nonetheless produce outputs similar to a claimed violation, this suggests that the plaintiff's work was not in fact essential to generating that output. Conversely, if comparable outputs consistently fail to appear when the plaintiff’s work is absent from the training data, that absence provides evidence that the work played a necessary role.

\bigskip

Finally, although our analysis is motivated by copyright law, the criterion we propose extends beyond strictly legal settings and beyond a binary determination of infringement. More generally, dependence (as we have defined it) can be understood as a measure of whether a prior work meaningfully contributed to a given output. In academic research, for example, such dependence might inform whether a work should be cited as a predecessor. In other contexts, it could determine whether a creator has a legitimate claim to recognition or to a share of the rents generated by downstream use.

Our framework also applies to settings in which the underlying objects are not literally copyrighted. Consider the generation of digital actors or persons: while an individual’s identity is not protected by copyright, it may nonetheless be morally or socially relevant to ask whether a generated character appropriates an existing person’s voice, personality, or likeness. In this sense, the framework provides a general approach to assessing attribution and appropriation, even outside the formal boundaries of intellectual property law.

\section{Properties of Permissible Creation} \label{sec:Properties}

This section clarifies the structure of the permissible set $p_{g}(C)$. Section \ref{sec:InternalStructure} shows that permissibility behaves like a ``core'' of generative capability, in the sense that the permissible set is monotone and self-stable. Section \ref{sec:NonEmpty} provides a sufficient condition for nonemptiness of the permissible set. Section \ref{sec:ComparativeStatics} considers comparative statics in $p_{g}(C)$ as the corpus $C$ is augmented with an additional creation. All proofs can be found in the Appendix.

\subsection{Structure of the Permissible Set} \label{sec:InternalStructure}

\begin{prop} \label{prop:InternalStructure}
  For every generator $g$, the permissible set satisfies:
  \begin{enumerate}

      \item Monotonicity: If $C \subseteq D$ then $p_{g}(C) \subseteq p_{g}(D)$
      \item Stability: $g(p_{g}(C)) = p_{g}(C)$
  \end{enumerate}
\end{prop}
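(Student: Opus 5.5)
The plan is to argue directly from the three axioms of a generator (preservation, monotonicity, idempotence) and the definition $p_g(C)=\bigcap_{c\in C} g(C\setminus\{c\})$. No earlier result beyond the definitions is needed. The only care required is bookkeeping about which elements of the larger corpus are being removed, plus a convention for the degenerate case $C=\emptyset$. For that case I will read the empty intersection as $g(\emptyset)$, i.e.\ everything generable with nothing to remove, so that in all cases $p_g(C)\subseteq g(C)$.

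\emph{Monotonicity.} Fix $C\subseteq D$ and $x\in p_g(C)$; I must show $x\in g(D\setminus\{d\})$ for every $d\in D$. I split into two cases according to whether $d$ belongs to the smaller corpus.
\begin{enumerate}
\item If $d\in C$, then $C\setminus\{d\}\subseteq D\setminus\{d\}$. Since $x\in g(C\setminus\{d\})$ by definition of $p_g(C)$, monotonicity of $g$ gives $x\in g(D\setminus\{d\})$.
\item If $d\notin C$, then $C\subseteq D\setminus\{d\}$. We have $x\in p_g(C)\subseteq g(C)$: either $C\neq\emptyset$ and $x\in g(C\setminus\{c\})\subseteq g(C)$ for any $c\in C$, or $C=\emptyset$ and the convention applies. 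Monotonicity then gives $x\in g(D\setminus\{d\})$.
\end{enumerate}
Intersecting over $d\in D$ yields $x\in p_g(D)$.

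\emph{Stability.} The inclusion $p_g(C)\subseteq g(p_g(C))$ is immediate from preservation. For the reverse, fix $c\in C$. Since $p_g(C)\subseteq g(C\setminus\{c\})$, monotonicity and then idempotence give
\[
g(p_g(C))\subseteq g\bigl(g(C\setminus\{c\})\bigr)=g(C\setminus\{c\}).
\]
Intersecting over $c\in C$ gives $g(p_g(C))\subseteq p_g(C)$. In the case $C=\emptyset$ the claim reads $g(g(\emptyset))=g(\emptyset)$, which is idempotence. Stability rests on a general fact: an intersection of $g$-closed sets is $g$-closed.

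The only step that is not mechanical is case 2 of monotonicity. There a new element $d\in D\setminus C$ is removed, and one must notice that $p_g(C)$ sits inside $g(C)$ rather than inside some $g(C\setminus\{c\})$ indexed by $d$. Pinning down the empty-corpus convention is the one place where the definitions need to be made precise.
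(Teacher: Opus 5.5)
Your proof is correct and follows essentially the same route as the paper. For monotonicity the paper writes $D=C\cup A$ and splits the intersection into removals from $C$ and removals from $A$, using $C\setminus\{c\}\subseteq C\subseteq (C\cup A)\setminus\{c'\}$. For stability it applies monotonicity and idempotence to each $g(C\setminus\{c\})$. Your convention $p_g(\emptyset)=g(\emptyset)$ makes explicit the nonempty-corpus assumption the paper uses silently when it fixes some $c\in C$ to handle the $A$-terms.
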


Part (1) studies the effect of expanding the set of existing works. Because the generator is monotone, a larger corpus expands the set of generable creations. At the same time, it introduces new works whose rights must be respected, potentially tightening the constraints on what can be produced. The result shows that the first effect dominates: expanding the corpus weakly enlarges the set of permissible creations. Intuitively, this is because if a creation can be generated in a way that does not rely on any particular existing work, then adding additional source works cannot introduce such a dependence. For example, if an AI novel generator is trained on novels from the 19th century and produces an output that does not essentially rely on any 19th century novel, then adding 20th century novels to the AI's training set will not make this novel dependent on some novel in the expanded corpus.

Part (2) says that the permissible set is closed under the generator: applying $g$ to permissible creations yields only permissible creations. This follows because if every input to the generator remains generable without $c$, then so too must the output. Closure thus captures the  natural principle that infringement cannot arise from combining non-infringing works.

\subsection{Non-Emptiness of the Permissible Set} \label{sec:NonEmpty}

We now turn to the question of existence: under what conditions is the permissible set non-empty? In general, it may be that \emph{all} generable creations are violations of some existing work, as illustrated in the following simple example:

\begin{example} Let $d=1$ so that each creation is a point on the real line. The corpus is $C =\{c_1,c_2\} = \{0,1\}$ and the generator is the convex hull generator $g:=g_{conv}$. Then $p_{g}(c_1,C) = \{1\}$ while $p_{g}(c_2,C)=\{0\}$, so the permissible set is empty: $p_{g}(C) = p_{g}(c_1,C) \cap p_{g}(c_2,C) = \emptyset$.    
\end{example}

We identify a sufficient condition for the permissible set to be nonempty, based on the \emph{Radon number} of the generator defined as follows.

\begin{defn}[Radon Number]
The \emph{Radon number} of a generator $g$, denoted $R(g)$, is the smallest integer $k$ such that any corpus $C$ of size at least $k$ contains two disjoint and nonempty subsets $A, B \subseteq C$ ($A \cap B = \emptyset$) satisfying:
\[ g(A) \cap g(B) \neq \emptyset \]
If no $k$ exists, set $R(g)=\infty$.
\end{defn}

The Radon number is a standard complexity measure in convex geometry. For the convex hull generator in $\mathbb{R}^d$, Radon's Theorem establishes that $R(g_{conv}) = d+2$, as illustrated in Figure \ref{fig:Radon}.

\begin{figure}[h] 
\begin{center}
\includegraphics[scale=0.3]{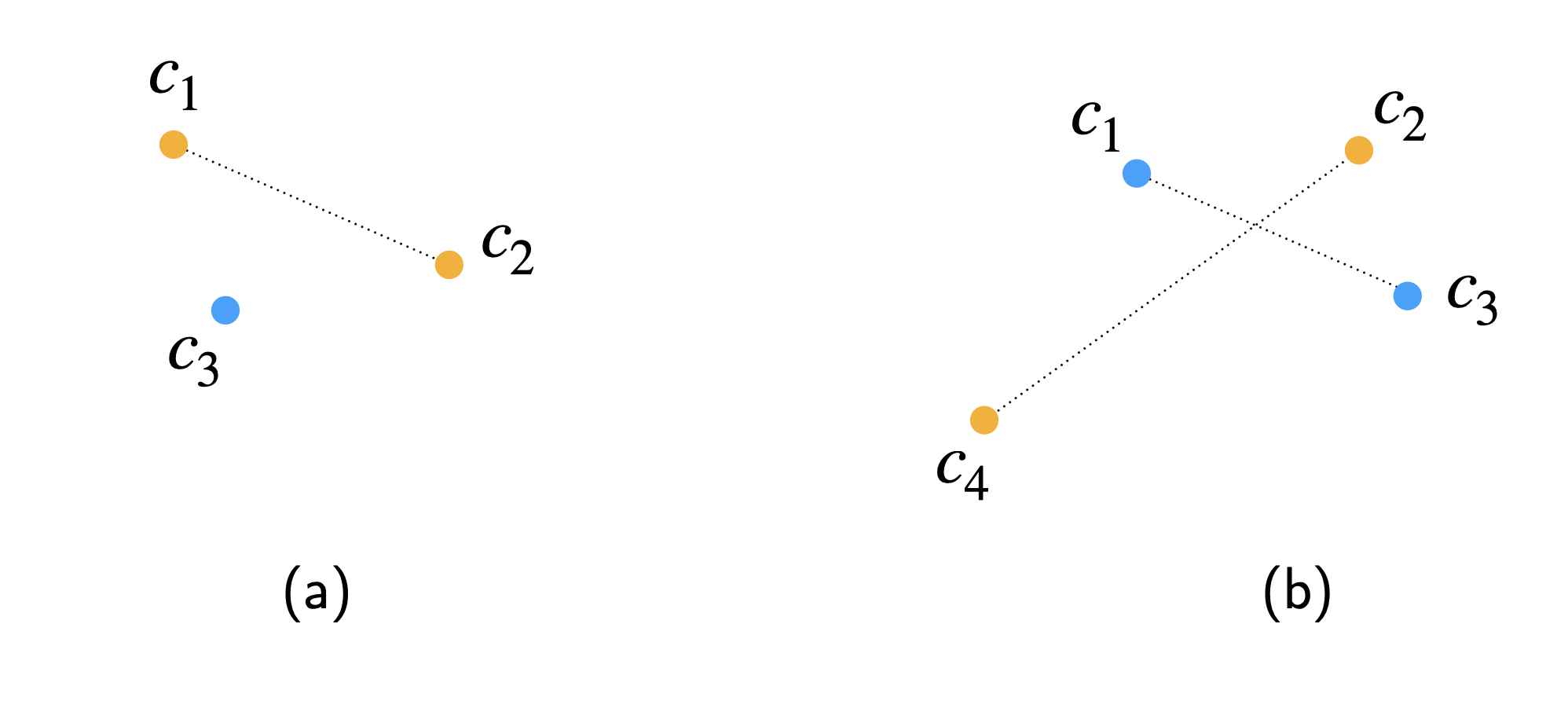}
\caption{Let $d=2$, so that creations are points in $\mathbb{R}^2$. \emph{Panel (a):} No three points in general position can be separated into disjoint subsets whose convex hulls overlap, so the Radon number is at least three; \emph{Panel (b):} Every set of four (or more) points can be separated in this way, for example in the figure let $A=\{c_1,c_3\}$ and $B=\{c_2,c_4\}$.} \label{fig:Radon}
\end{center}
\end{figure}

\begin{prop} \label{prop:radon} Fix any generator $g$. If $|C| \geq R(g)$, then the permissible set is non-empty: $p_{g}(C) \neq \emptyset$.
\end{prop}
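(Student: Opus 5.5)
Apply the Radon property to a subcorpus $C'\subseteq C$ with $|C'| = R(g)$ (or to $C$ itself), and let $x$ be a common point of the two resulting hulls. We then show that $x$ survives the removal of any single creation $c\in C$.

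First, since $|C| \ge R(g)$, the definition of the Radon number applies to $C$ directly, provided the definition is read as quantifying over all corpora of size at least $R(g)$, and I would read it that way. So there are disjoint nonempty $A,B\subseteq C$ with $g(A)\cap g(B)\neq\emptyset$. Pick $x \in g(A)\cap g(B)$.

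Second, fix an arbitrary $c\in C$. Because $A\cap B=\emptyset$, the creation $c$ lies in at most one of $A$ and $B$. Without loss of generality suppose $c\notin A$ (if $c \in A$ then $c\notin B$, and we swap roles). Then $A\subseteq C\setminus\{c\}$, so monotonicity of $g$ gives
\[
x\in g(A)\subseteq g(C\setminus\{c\}) = p_g(c,C).
\]
Since $c$ was arbitrary, $x\in\bigcap_{c\in C}p_g(c,C)=p_g(C)$, and hence $p_g(C)\neq\emptyset$. The argument uses only monotonicity; preservation and idempotence are not needed. If $C$ is infinite, the same reasoning applies verbatim, since the definition of $R(g)$ covers corpora of any size at least $k$.

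The only potential obstacle is definitional. If one worries that the Radon condition should only be invoked on finite corpora of size exactly $R(g)$, choose a finite $C'\subseteq C$ with $|C'|=R(g)$ and find $A,B\subseteq C'$. The disjointness argument is unchanged, and monotonicity gives $g(A)\subseteq g(C\setminus\{c\})$ because $A\subseteq C'\setminus\{c\}\subseteq C\setminus\{c\}$. So the key step, that disjointness of $A$ and $B$ means no single $c$ can destroy both generative paths to $x$, goes through either way.
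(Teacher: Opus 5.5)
Your proof is correct and matches the paper's argument essentially verbatim: take $x\in g(A)\cap g(B)$ from the Radon property applied to $C$, then use disjointness of $A$ and $B$ together with monotonicity to get $x\in g(C\setminus\{c\})$ for every $c\in C$. Your fallback of passing to a finite subcorpus of size $R(g)$ is a harmless extra safeguard, since the paper's definition already covers every corpus of size at least $R(g)$.
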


\begin{cor} \label{corr:radon} Fix any convex-valued generator $g$. If $\vert C \vert \geq d+2$, then the permissible set is nonempty: $p_{g}(C) \neq \emptyset$.
\end{cor}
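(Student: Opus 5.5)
Corollary~\ref{corr:radon} will follow from Proposition~\ref{prop:radon} once we show that $R(g)\le d+2$ for every convex-valued generator $g$. Because the statement of Proposition~\ref{prop:radon} only requires $|C|\ge R(g)$, the inequality $R(g)\le d+2$ is enough. The plan is to compare $g$ with the convex hull generator and to use Radon's theorem, $R(g_{conv})=d+2$, which the paper already records.

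\textbf{Step 1: $g$ dominates the convex hull.} Let $g$ be convex-valued and let $A$ be any corpus. Preservation gives $A\subseteq g(A)$. Since $g(A)$ is convex by assumption, it contains the smallest convex set containing $A$, so $\mathrm{conv}(A)\subseteq g(A)$.

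\textbf{Step 2: transfer Radon partitions.} Take a corpus $C$ with $|C|\ge d+2$ and choose any $d+2$ of its points. By Radon's theorem these points split into disjoint nonempty sets $A,B\subseteq C$ with $\mathrm{conv}(A)\cap\mathrm{conv}(B)\neq\emptyset$. Step 1 then gives
\[
\emptyset\neq \mathrm{conv}(A)\cap \mathrm{conv}(B)\subseteq g(A)\cap g(B).
\]
So every corpus of size at least $d+2$ contains the required pair of disjoint subsets. By the definition of the Radon number as the smallest such integer, $R(g)\le d+2$.

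\textbf{Step 3: conclude.} Apply Proposition~\ref{prop:radon}. Since $|C|\ge d+2\ge R(g)$, we get $p_g(C)\neq\emptyset$.

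The argument has no substantive obstacle; the care needed is in quoting Radon's theorem correctly. Only the upper bound $R(g_{conv})\le d+2$ is used, which is exactly the statement that any $d+2$ points in $\mathbb{R}^d$ admit a Radon partition. Corpora that are infinite or not in general position cause no trouble, because we simply select $d+2$ points, and points in degenerate position still admit a Radon partition.
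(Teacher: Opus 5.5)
Your proof is correct and uses the same ingredients as the paper's: Radon's theorem, the inclusion $\mathrm{conv}(A)\subseteq g(A)$ for convex-valued $g$, and Proposition~\ref{prop:radon}. The only difference is where the comparison with $g_{conv}$ enters: you use it to show $R(g)\le d+2$ and apply Proposition~\ref{prop:radon} to $g$ directly, whereas the paper applies Proposition~\ref{prop:radon} to $g_{conv}$ and then transfers nonemptiness via $p_{g_{conv}}(C)\subseteq p_g(C)$ (from Corollary~\ref{corr:CHminimal}); your Step 1 also gets minimality of the convex hull directly from preservation and convexity rather than through Proposition~\ref{prop:convex_v}.
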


\subsection{Comparative Statics} \label{sec:ComparativeStatics}

The next result tells us how permissibility changes when new creations enter the world and are included in the set of existing works whose intellectual property must be respected. By Part (a) of Proposition \ref{prop:InternalStructure} (monotonicity of the permissible set), we know that $p_{g}(C) \subseteq p_{g}(C\cup \{c\})$ for any addition $c$, so the set of permissible creations weakly expands. The proposition below provides a more precise account depending on $c$.

\begin{prop} \label{prop:CompStatics} Fix a generator $g$  and corpus $C$. Let $c$ be a new creation not already in $C$. Then exactly one of the following cases applies:
\begin{enumerate}
\item If $c$ is permissible, the permissible set remains unchanged: 
\[c\in p_{g}(C) \quad \Longrightarrow \quad p_{g}(C) = p_{g}(C \cup \{c\})\]
\item If $c$ is a violation, the permissible set strictly expands:  
\[c \in v_{g}(C) \quad \Longrightarrow \quad  p_{g}(C) \subsetneq p_{g}(C \cup \{c\})\]
\item If $c$ is not generable, the effect on the permissible set is ambiguous: 
\[c \notin g(C) \quad \Longrightarrow \quad p_{g}(C) \subseteq p_{g}( C \cup \{c\}) \]
with equality or strict inclusion both possible. 
\end{enumerate}
    
\end{prop}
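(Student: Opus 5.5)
The plan is to write $D := C\cup\{c\}$ and work directly from the definitions. The basic observation is that for every $x\in D$,
\[
p_g(x,D)=g(D\setminus\{x\}).
\]
In particular $p_g(c,D)=g(C)$, and $p_g(c',D)=g\bigl((C\setminus\{c'\})\cup\{c\}\bigr)$ for $c'\in C$. Hence
\[
p_g(D)=g(C)\cap\bigcap_{c'\in C} g\bigl((C\setminus\{c'\})\cup\{c\}\bigr).
\]
Two preliminary facts are needed.
\begin{enumerate}
\item The three cases are mutually exclusive and exhaustive. For $C\neq\emptyset$, monotonicity gives $p_g(C)\subseteq g(C)$, and $v_g(C)=g(C)\setminus p_g(C)$ by De Morgan. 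So $g(C)$ splits into $p_g(C)$ and $v_g(C)$, and the third case is the complement of $g(C)$. The degenerate case $C=\emptyset$ has to be handled under whatever convention is adopted for the empty intersection.
\item The weak inclusion $p_g(C)\subseteq p_g(D)$ holds in every case. This is Part (1) of Proposition \ref{prop:InternalStructure}.
\end{enumerate}

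\textbf{Case 1.} Suppose $c\in p_g(C)$. It remains to show $p_g(D)\subseteq p_g(C)$. Fix $c'\in C$. Since $c\in g(C\setminus\{c'\})$, preservation gives $(C\setminus\{c'\})\cup\{c\}\subseteq g(C\setminus\{c'\})$. Monotonicity followed by idempotence then gives
\[
g\bigl((C\setminus\{c'\})\cup\{c\}\bigr)\subseteq g\bigl(g(C\setminus\{c'\})\bigr)=g(C\setminus\{c'\}).
\]
The reverse inclusion follows from monotonicity alone, so the two sets are equal. Substituting into the formula above yields
\[
p_g(D)=g(C)\cap p_g(C)=p_g(C).
\]
This absorption step, where idempotence lets us add a generable point to a corpus without changing its closure, is the only nontrivial step. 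I expect it to be the main obstacle, mainly in getting the order of monotonicity and idempotence right.

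\textbf{Case 2.} Suppose $c\in v_g(C)$, so $c\in g(C)$ but $c\notin g(C\setminus\{c_0\})$ for some $c_0\in C$. Then $c\notin p_g(C)$. To show $c\in p_g(D)$, check each term in the intersection:
\begin{itemize}
\item the term $g(C)$ contains $c$ by assumption;
\item each term $g\bigl((C\setminus\{c'\})\cup\{c\}\bigr)$ contains $c$ by preservation.
\end{itemize}
So $c\in p_g(D)\setminus p_g(C)$, and together with the weak inclusion this gives strict inclusion.

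\textbf{Case 3.} The weak inclusion already holds, so it remains to give examples showing that both outcomes occur. Both use the convex hull generator $g_{conv}$ with $d=1$.
\begin{itemize}
\item \emph{Strict inclusion.} Take $C=\{0,1\}$, whose permissible set is empty by the earlier example, and add $c=2$. The three sets are $p_g(0,D)=[1,2]$, $p_g(1,D)=[0,2]$ and $p_g(2,D)=[0,1]$. Their intersection is $\{1\}$, which strictly enlarges $p_g(C)=\emptyset$.
\item \emph{Equality.} Take $C=\{0\}$ and $c=1$. Then $p_g(C)=g(\emptyset)=\emptyset$, and $p_g(D)=g(\{1\})\cap g(\{0\})=\emptyset$, so the permissible set is unchanged.
\end{itemize}
In both examples $c\notin g(C)$, as Case 3 requires.
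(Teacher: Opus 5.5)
Your proof is correct and follows essentially the same route as the paper. In Case 1 you use the same absorption argument (preservation, then monotonicity and idempotence) to get $g\bigl((C\setminus\{c'\})\cup\{c\}\bigr)=g(C\setminus\{c'\})$. In Case 2 you show that $c$ itself becomes permissible, which the paper argues by contradiction and you argue directly; your one-dimensional Case 3 examples and explicit check that the cases partition $\mathbb{R}^d$ for $C\neq\emptyset$ are only minor variations on the paper's Example \ref{ex:BothPossible} and its implicit assumptions.
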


Part (1) says that including a new permissible generation in the set of accepted existing works does not expand the set of generations that are permissible. On the other hand, Part (2) says that adding a \emph{violation} to the set of existing works strictly expands the permissible set. Intuitively, when a creation that was previously a violation is added to the corpus, it becomes a primary source in its own right. This breaks the counterfactual dependence on whatever original work it had relied upon: the generator now has a second path to that part of the creative space.

Finally, Part (3) considers the introduction of a genuine innovation which was not generable from the original corpus. In this case the effect on the permissible set is ambiguous and we provide an example of each case below.

\begin{example} \label{ex:BothPossible} Let $d=2$ so that creations are points in $\mathbb{R}^2$. 

\begin{enumerate}
    \item The corpus is $C= \{c_1,c_2,c_3\}=\{(-1,0),(0,0),(1,0)\}$ and the generator is the convex hull generator. Consider the addition of the creation $c_4=(0,1)$, as depicted in Panel (a) of Figure \ref{fig:Example}. Then 
    \[p_{g}(C)= p_{g}(C \cup\{c_4\}) = \{(0,0)\}\]
    so the inclusion of $c_4$ does not expand the permissible set.
    \item The corpus is $C= \{c_1,c_2,c_3\}=\{(0,0),(0,1),(1,0)\}$ and the generator is the convex hull generator. Consider the addition of the creation $c_4=(1,1)$, as depicted in Panel (b) of Figure \ref{fig:Example}. Then
    \[p_{g}(C)= \varnothing \subsetneq \{(1/2,1/2)\} = p_{g}(C \cup \{c_4\})\]
    so the inclusion of $c_4$ strictly expands the permissible set.
\end{enumerate}

\end{example}

\begin{figure} 
\begin{center}
\includegraphics[scale=0.25]{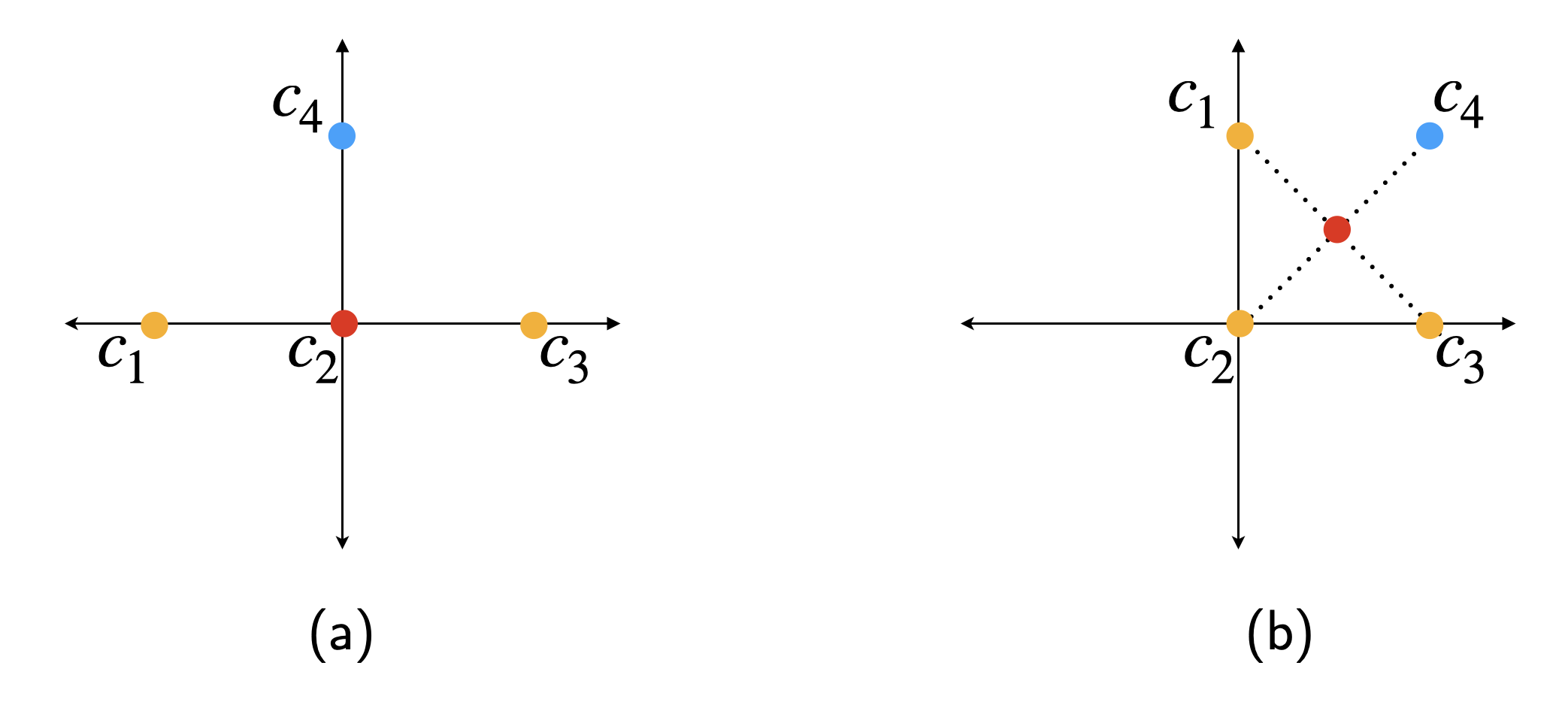}
\caption{Two examples illustrating how adding a work can affect the permissible set under the convex hull generator. In Panel (a), adding $c_4$ leaves the permissible set unchanged: $p_g(\{c_1,c_2,c_3\})= \{c_2\}=p_g(\{c_1,c_2,c_3,c_4\})$. In Panel (b), adding $c_4$ strictly expands the permissible set: $p_g(\{c_1,c_2,c_3\})=\varnothing$ while $p_g(\{c_1,c_2,c_3,c_4\})$ is the singleton intersection point of the line segments $\overline{c_1c_3}$ and $\overline{c_2c_4}$, as indicated in red.}
 \label{fig:Example}
\end{center}
\end{figure}

\section{Permissible Creations in the Long Run} \label{sec:LongRun}

The previous section characterized the permissible set for given  corpora. But creative corpora are not static: new works are constantly being produced and added to the training data of generative models. This raises a natural question: when the corpus is large, does infringement become a vanishing concern---with the permissible set eventually encompassing nearly all generable creations---or does it remain a persistent feature of generative technology?

This section shows that the answer depends on the nature of creative innovation. When new works push the frontier only ``gradually'' (in a suitably defined sense), the permissible ratio converges to one: almost everything generable becomes permissible. But when breakthrough innovations regularly occur, a positive fraction of generable creations can remain violations even as the corpus grows without bound.

To understand the extent of permissible generation, we will consider the fraction of generable creations that are permissible. Formally, let $\mathrm{Vol}(A)$ denote the Lebesgue measure (or \emph{volume}) of a set $A \subset \mathbb{R}^d$. The \emph{permissible
ratio} is defined as 
\[
r_{g}\left(C\right):=\frac{\mbox{Vol}\left(p_{g}\left(C\right)\right)}{\mbox{Vol}\left(g\left(C\right)\right)}
\]
i.e., the proportion by volume of permissible creations that lie in
the generable set.\footnote{When $g$ is convex-valued, then generically
$\mbox{Vol}\left(g\left(C\right)\right)>0$ and so this ratio is well-defined.}

We study the following creative process. Let $X_{i}$ be an i.i.d. random variable on $\mathbb{R}^{d}$ with some
strictly positive density. For each $n \in \mathbb{N}$, define the random corpus $C_{n}=\left\{ X_{1},\dots,X_{n}\right\} $.

Recall that a random variable (on $\mathbb{R}$) has \textit{light
tails} if its hazard rate is increasing.\footnote{Formally, if we let $F$ denote the cdf of the random variable and
$h\left(t\right)=\frac{f\left(t\right)}{1-F\left(t\right)}$, then
$F$ has light tails if $h\left(\cdot\right)$ is increasing.} In higher dimensions, tail behavior must account for how mass decays in different directions.  Let $U := \{ u \in \mathbb{R}^{d} : \|u\| = 1 \}$ denote the unit sphere in $\mathbb{R}^{d}$.
We express $X \in \mathbb{R}^{d}$ in terms of its polar coordinates.  The radial coordinate is
\[
r = \|X\| \in \mathbb{R}_{+},
\]
i.e., the Euclidean norm of $X$, and the angular coordinate is
\[\theta = \frac{X}{\|X\|} \in U,\]
the unit vector pointing in the direction of $X$. Every $X \neq 0$ admits the decomposition $X = r \theta$.

\begin{defn}
$X$ has \textit{uniform light tails} if
\begin{enumerate}
\item $\left\|X\right\|$ has light tails
\item there exists a $c>0$ such that for all $r>0$, the conditional density
satisfies
\[
f\left(\theta\text{ : }\left\|X\right\|\geq r\right)\geq c
\]
 
\end{enumerate}
\end{defn}
The uniform light tails condition captures  domains where expansion of the creative frontier is gradual. Formally, condition (1) ensures that extreme realizations become increasingly rare, while condition (2) rules out ``pockets'' of transient innovation that concentrate in particular directions. Together, these conditions mean that as the corpus grows, each new work is unlikely to lie far outside the convex hull of existing works.  The standard multivariate Normal distribution satisfies uniform light tails, as does any standard elliptical distribution from a light-tailed family.

We also require a technical continuity condition on the generator.
\begin{defn}
$g$ is \textit{uniform lower semicontinuous (ULS)} if for every $\varepsilon>0$,
there is a $\delta>0$ such that for all $C\in\mathcal{C}$ where
$0\in C$,
\[
\left(1-\varepsilon\right)g\left(C\right)\subseteq g\left(\left(1-\delta\right)C\right)
\]
\end{defn}
Intuitively, ULS says that contracting the corpus slightly causes only a slight contraction in the generable set. This ensures that the generative frontier does not depend too sensitively on the precise location of any single training point. Positive homogeneity---i.e., $g(\alpha C) = \alpha g(C)$ for all $\alpha > 0$---is a sufficient condition, and is satisfied by the convex hull, splice, and box generators.

We now state our main asymptotic result.
\begin{thm}
\label{thm:asym}Fix a convex-valued and ULS generator $g$. If $X_{i}$ has
uniform light-tails, then a.s. 
\[
\lim_{n\rightarrow\infty}r_{g}\left(C_{n}\right)=1
\]
\end{thm}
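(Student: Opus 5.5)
The strategy is to sandwich the permissible set between scaled copies of the generable set. We aim to show that for any $\varepsilon>0$, almost surely for all large $n$,
\[
(1-\varepsilon)\,g(C_n)\subseteq p_g(C_n)\subseteq g(C_n).
\]
Translate coordinates so that $0$ lies in the interior of the support and is (eventually) in $C_n$. Since the density is strictly positive, we may work with $C_n\cup\{0\}$ and handle this point at the end by monotonicity. Because $g(C_n)$ is convex and contains $0$ in its interior for large $n$, we have $\mathrm{Vol}((1-\varepsilon)g(C_n))=(1-\varepsilon)^d\,\mathrm{Vol}(g(C_n))$. The sandwich therefore gives $\liminf r_g(C_n)\ge (1-\varepsilon)^d$, and letting $\varepsilon\downarrow 0$ along a countable sequence yields the theorem.

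To obtain the sandwich, we use ULS to reduce the problem to a statement about the corpus. Fix $\varepsilon$ and take the corresponding $\delta$. By ULS,
\[
(1-\varepsilon)\,g(C_n)\subseteq g((1-\delta)C_n).
\]
It therefore suffices to show that, eventually, for every $c\in C_n$,
\[
(1-\delta)C_n\subseteq g(C_n\setminus\{c\}).
\]
Given this, monotonicity and idempotence give $g((1-\delta)C_n)\subseteq g(C_n\setminus\{c\})=p_g(c,C_n)$ for every $c$, and hence $g((1-\delta)C_n)\subseteq p_g(C_n)$. Since $g$ is convex-valued, $g(C_n\setminus\{c\})\supseteq \mathrm{conv}$ of whatever points it contains. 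We can assume $g\supseteq g_{conv}$ via preservation plus convexity: $g(D)$ is convex and contains $D$, so $g(D)\supseteq \mathrm{conv}(D)$. It is thus enough to prove the geometric claim
\[
(1-\delta)\,C_n\subseteq \mathrm{conv}(C_n\setminus\{c\})\quad\text{for all } c\in C_n,
\]
eventually almost surely. In words: every sample point shrunk by a factor $1-\delta$ lies in the hull of the others. Equivalently, the hull is ``robust,'' in that no single point, in particular no extreme point, is essential for covering the $(1-\delta)$-shrunk sample.

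The main obstacle is this robustness claim, and here uniform light tails enter. Let $R_n=\max_i\|X_i\|$. The plan is to show that, almost surely for large $n$, for every direction $u\in U$ there are at least two sample points in the cap
\[
\{x:\ \|x\|\ge (1-\delta/2)R_n,\ \angle(x,u)\le\eta\}
\]
for a small angle $\eta=\eta(\delta)$. Given this, the hull of $C_n\setminus\{c\}$ still contains the ball of radius roughly $(1-\delta/2)R_n\cos\eta\ge(1-\delta)R_n$, which contains $(1-\delta)C_n$. To get the caps, note that the increasing hazard rate of $\|X\|$ makes the tail ratio $P(\|X\|\ge(1-\delta/2)t)/P(\|X\|\ge t)$ tend to infinity (or at least be bounded below by a quantity that grows) as $t\to\infty$. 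Consequently, the number of points with norm in $[(1-\delta/2)R_n,R_n]$ grows without bound, roughly like a multiple $\to\infty$ of the $O(1)$ extremal count. Condition (2), a conditional angular density at least $c$, then spreads these points over the sphere, so a union bound over a finite $\eta$-net of directions shows that each cap fails to contain two points with probability summable in $n$. Borel--Cantelli finishes the argument. The delicate part is making the ``number of near-maximal points $\to\infty$'' statement rigorous and summable, given that $R_n$ is itself random. I would first try to condition on $R_n$ lying in a deterministic window $[a_n,b_n]$ (with $a_n,b_n$ quantile-based, chosen so that $R_n\in[a_n,b_n]$ eventually a.s.), and then use binomial/Chernoff bounds for the counts in the deterministic shells $\{\|x\|\ge(1-\delta/2)a_n\}$.
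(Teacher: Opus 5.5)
Your proposal is correct and shares the paper's outer architecture. ULS reduces everything to the leave-one-out containment $(1-\delta)C_n\subseteq\mathrm{conv}(C_n\setminus\{c\})$ for all $c$, preservation plus convex-valuedness gives $g\supseteq g_{conv}$, and the volume sandwich finishes. Where you differ is in how you prove that containment.

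The paper first shows that the newest draw barely sticks out: $X_n\in(1+\varepsilon)K_{n-1}$ eventually (Lemma~\ref{lem:subset}). It gets this by combining a magnitude bound $\|X_n\|\le(1+\delta)r_n$ with the fact that every cap of a finite net contains \emph{at least one} earlier point of norm at least $r_n$. It then converts this dynamic statement into removal-robustness through the deterministic sequence lemma (Lemma~\ref{lem:all_i}), applied to $Z_i=u\cdot X_i$.

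You ask instead for \emph{two} points per cap at the near-maximal scale, so deleting any single point still leaves a hull containing a ball of radius nearly $R_n$. This bypasses the sequence lemma. It is also automatically uniform over directions through the finite net. The paper's use of Lemma~\ref{lem:all_i} must be made uniform over uncountably many $u$, and its $Z_i$ need not be positive, so that step takes extra care. Demanding two good points instead of one costs essentially nothing. The paper's route, for its part, isolates the interpretable statement that the frontier advances only gradually, and keeps the combinatorial step purely deterministic.

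On the step you flag as delicate: divergence of the tail ratio alone does not give summability; you need a rate. Use the quantitative consequence of the increasing hazard rate that the paper uses, namely $\bar F(st)\le\bar F(t)^s$ for $s\ge 1$ (log-concavity of $\bar F$ with $\log\bar F(0)=0$).

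Define $r_n$ by $\bar F(r_n)=K\log n/n$. Then $\sum_n\bar F((1+\gamma)r_n)<\infty$, so $\|X_n\|\le(1+\gamma)r_n$ eventually. Since $r_n\uparrow\infty$, also $R_n\le(1+\gamma)r_n$ eventually. You need no lower end of the window and no conditioning, only the intersection of eventually-a.s. events.

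Now take caps of half-angle $\eta/2$ around an $\eta/2$-net. The number of points of norm at least $r_n$ in a fixed cap is binomial with mean at least $c\,s_v K\log n$. So it is at most one with probability $O(n^{-c s_v K}\log n)$, which is summable for $K$ large. Choose $\gamma,\eta$ with $(1-\delta)(1+\gamma)\le\cos\eta$. Then $(1-\delta)C_n$ lies in the ball of radius $r_n\cos\eta$ about the origin, and that ball lies in every $\mathrm{conv}(C_n\setminus\{c\})$.

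Finally, drop the translation. Uniform light tails is defined through polar coordinates about the origin and is not translation-invariant; for example, a shifted standard Gaussian violates condition (2). No translation is needed anyway. The same ball containment gives $0\in\mathrm{conv}(C_n)$ and $0\in\mathrm{conv}(C_n\setminus\{c\})$ for every $c$, which is exactly what justifies adjoining $0$ before applying ULS. The paper instead applies ULS to $\mathrm{conv}(C_n)$.
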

Theorem \ref{thm:asym} says that in creative domains with gradual innovation, copyright infringement becomes a vanishing concern as the corpus grows. The violation set---creations whose generability depends essentially on some specific existing work---shrinks to measure zero relative to the full generable set. Intuitively, when new works arrive gradually, the corpus eventually becomes rich enough that every generable creation has multiple ``generative paths'' leading to it; removing any single work from the training data leaves the creation still reachable.

This result implies that in creative markets where innovation is incremental, regulation (in the sense of our proposed criterion) only has bite in the early stages of a creative domain. If it is imposed sufficiently late, or if sufficiently many generations have already flooded the market, then very few creators will be able to defend critical dependence on their work.

The theorem also clarifies which types of creative works are most vulnerable to losing effective intellectual property protection within our framework. Works located in the interior of the creative space---surrounded by many similar works---quickly become dispensable as training data accumulate. By contrast, works at the frontier, in sparsely populated regions of the space, may remain essential for generating nearby outputs even as the corpus grows large.

Our result extends beyond creative processes that satisfy uniform light tails. For example,  generic multivariate random variables can be normalized to identity covariance via a linear transformation (whitening). Since Theorem \ref{thm:asym} is preserved under linear transformations, any elliptical distribution from a light-tailed family (e.g., Normal, exponential) satisfies the premise. What the theorem rules out are heavy-tailed distributions, where breakthrough innovations regularly occur. The following example demonstrates that in such domains, a positive fraction of generable creations remain violations even asymptotically.
\begin{example}[Heavy tails]
Suppose $X_{i}$ are Pareto distributed on $\mathbb{R}_{+}$ and let $g=g_{\text{conv}}$ be the convex hull generator. The Pareto distribution is the canonical model of heavy-tailed phenomena, arising in contexts where ``superstar'' outcomes regularly occur---bestselling novels, viral content, or paradigm-shifting artistic innovations. Note that
\[
r_{g}\left(C_{n}\right)=\text{\ensuremath{\frac{\mbox{Vol}\left(p_{g}\left(C_{n}\right)\right)}{\mbox{Vol}\left(g\left(C_{n}\right)\right)}}}\leq\frac{\max\left\{ X_{1},\dots,X_{n-1}\right\} }{\max\left\{ X_{1},\dots,X_{n}\right\} }
\]
For Pareto distributions, the ratio of successive maxima does not converge to one almost surely. The $n$-th observation has a non-negligible probability of dramatically exceeding all previous observations, no matter how large $n$ becomes. Consequently, $r_g(C_n)$ remains bounded below one: the most recent breakthrough always defines a substantial portion of the violation set.
\end{example}

This contrast between light-tailed and heavy-tailed domains implies that, in creative fields characterized by occasional breakthroughs---such as avant-garde art and literary fiction---protection for individual works may retain its importance even as corpora grow large. Creators of paradigm-shifting works can continue to have legitimate infringement claims because their contributions remain essential for generating outputs near the creative frontier. By contrast, in mature creative markets where innovation is largely incremental---possibly including genre fiction and commercial movies---the marginal contribution of any single work diminishes rapidly as the corpus expands, making individual infringement claims correspondingly weaker.

\section{Extension: General Violations} \label{sec:groupwise}

The analysis so far has focused on a setting in which the entire training corpus consists of protected works and infringement is defined at the level of individual works. In practice, however, not all works are protected---some belong to the public domain---and copyright disputes often arise at the level of collections rather than single works. A novelist may sue over their entire catalog; a photographer
may claim infringement of their portfolio; a class action may aggregate
works by many creators. This subsection considers a more general framework that allows for unprotected works 
and groupwise violations.

A \textit{collection} of protected works is a family of subsets $\mathcal{A}\subseteq\mathcal{C}$.
In the main case we have considered thus far, $\mathcal{A}$
is the collection of all singleton works. We allow the collection of protected works to be different from the collection of all singleton works and extend the notions of
permissibility and violation correspondingly.
\begin{defn}
Fix a generator $g$, a collection of protected works $\mathcal{A}$, and a corpus $C$.
The permissible set is 
\[
p_{g}^{\mathcal{A}}\left(C\right):=\bigcap_{A\in\mathcal{A}}g\left(C\setminus A\right)
\]
and the violation set is 
\[
v_{g}^{\mathcal{A}}\left(C\right):=g\left(C\right)\setminus p_{g}^{\mathcal{A}}\left(C\right).
\]
\end{defn}
For each set of protected works $A\in\mathcal{A}$ in the collection, $g\left(C\setminus A\right)$
consists of all outputs that can be generated without using any input
from $A$, i.e. permissible with respect to $A$. The permissible
set $p_{g}\left(\mathcal{A},C\right)$ thus consists of all works
permissible with respect to every set in the collection $\mathcal{A}$. 

Proposition \ref{prop:InternalStructure} extends fully.
\begin{prop}
\label{prop:groupwise} For every generator $g$ and collection $\mathcal{A}$,
the permissible set satisfies
\begin{enumerate}
\item If $C\subseteq D$, then $p_{g}^{\mathcal{A}}\left(C\right)\subseteq p_{g}^{\mathcal{A}}\left(D\right)$ 
\item $g\left(p_{g}^{\mathcal{A}}\left(C\right)\right)=p_{g}^{\mathcal{A}}\left(C\right)$ 
\end{enumerate}
\end{prop}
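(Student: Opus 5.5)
The plan is to argue directly from the three closure axioms (preservation, monotonicity, idempotence), in the same way one would prove Proposition \ref{prop:InternalStructure}. The key observation is that the collection $\mathcal{A}$ is held fixed while the corpus varies, and that each set $A \in \mathcal{A}$ enters only through the set difference $C \setminus A$. Nothing in the argument uses that $A$ is a singleton or that $A \subseteq C$. So the groupwise case should go through verbatim.

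For part (1), fix $C \subseteq D$ and any $A \in \mathcal{A}$. Then $C \setminus A \subseteq D \setminus A$, so monotonicity of $g$ gives $g(C \setminus A) \subseteq g(D \setminus A)$. Intersecting over all $A \in \mathcal{A}$ preserves the inclusion termwise, which yields
\[
p_{g}^{\mathcal{A}}(C)=\bigcap_{A\in\mathcal{A}}g(C\setminus A)\subseteq\bigcap_{A\in\mathcal{A}}g(D\setminus A)=p_{g}^{\mathcal{A}}(D).
\]

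For part (2), the inclusion $p_{g}^{\mathcal{A}}(C) \subseteq g(p_{g}^{\mathcal{A}}(C))$ is immediate from preservation. For the reverse inclusion, fix $A \in \mathcal{A}$. By definition of the intersection, $p_{g}^{\mathcal{A}}(C) \subseteq g(C \setminus A)$. Applying monotonicity and then idempotence gives
\[
g\bigl(p_{g}^{\mathcal{A}}(C)\bigr) \subseteq g\bigl(g(C\setminus A)\bigr) = g(C\setminus A).
\]
Since this holds for every $A \in \mathcal{A}$, $g(p_{g}^{\mathcal{A}}(C))$ lies in the intersection, namely $p_{g}^{\mathcal{A}}(C)$. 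This establishes the equality.

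I do not expect a genuine obstacle; the argument is essentially bookkeeping. The only point needing care is the degenerate case $\mathcal{A} = \emptyset$, where the intersection must be given a convention. If it is read as $g(C)$ (nothing is protected, so everything generable is permissible), both claims reduce to monotonicity and idempotence of $g$ itself. If it is read as all of $\mathbb{R}^d$, both claims again hold trivially, since $g(\mathbb{R}^d) = \mathbb{R}^d$ by preservation. I would state this convention explicitly before giving the two steps above.
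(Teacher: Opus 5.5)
Your proof is correct and matches the paper's argument step for step: for part (1), monotonicity applied to $C\setminus A \subseteq D\setminus A$ for each $A\in\mathcal{A}$ and then intersecting; for part (2), preservation for one inclusion, and monotonicity plus idempotence applied to each $g(C\setminus A)$ and then intersecting for the other. Your remark on the convention for $\mathcal{A}=\emptyset$ is a harmless addition that the paper does not address.
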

Part (1) is monotonicity in the corpus: expanding the corpus weakly
expands the permissible set, just as for individual works. Part (2)
is stability: generating from permissible works yields only permissible
works.

We now consider what happens to the permissible set under different collections. We say a collection or protected works is \textit{richer} than another if for every set of protected works in the former, there is a larger set of protected works in the latter. 

\begin{defn}
A collection $\mathcal{A}$ is \textit{richer} than another collection $\mathcal{B}$ if for every $B\in\mathcal{B}$,
there exists some $A\in\mathcal{A}$ such that $B\subseteq A$. 
\end{defn}

There are two notions of what it means for a collection to have ``more" protected works. First, the collection could simply consist of a greater number of protected works. Second, the collection could consist of larger groups of protected works, e.g., novelists coming together to sue as a coalition. The first notion corresponds to greater volume while the second to greater aggregation. Our definition of richness captures both.  

The following result shows that the permissible set is smaller in richer collections.

\begin{prop} \label{prop:rich} Fix a generator
$g$ and corpus $C$. If a collection $\mathcal{A}$ is richer than a collection $\mathcal{B}$ then
\[
p_{g}^{\mathcal{A}}\left(C\right)\subseteq p_{g}^{\mathcal{B}}\left(C\right).
\]
\end{prop}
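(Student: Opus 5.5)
The plan is to unfold the definition of the permissible set and use only the monotonicity axiom of the generator. Fix $g$, $C$, and collections $\mathcal{A}$, $\mathcal{B}$ with $\mathcal{A}$ richer than $\mathcal{B}$. Since
\[
p_{g}^{\mathcal{B}}\left(C\right)=\bigcap_{B\in\mathcal{B}}g\left(C\setminus B\right),
\]
it suffices to show that $p_{g}^{\mathcal{A}}\left(C\right)\subseteq g\left(C\setminus B\right)$ for each fixed $B\in\mathcal{B}$.

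Take an arbitrary $B\in\mathcal{B}$. Richness gives some $A\in\mathcal{A}$ with $B\subseteq A$. Then $C\setminus A\subseteq C\setminus B$, and monotonicity of $g$ yields $g\left(C\setminus A\right)\subseteq g\left(C\setminus B\right)$. Because an intersection is contained in each of its terms,
\[
p_{g}^{\mathcal{A}}\left(C\right)=\bigcap_{A'\in\mathcal{A}}g\left(C\setminus A'\right)\subseteq g\left(C\setminus A\right)\subseteq g\left(C\setminus B\right).
\]
Intersecting over all $B\in\mathcal{B}$ gives the claim.

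The only point requiring care is the degenerate case of empty collections. If $\mathcal{B}=\emptyset$, the intersection defining $p_{g}^{\mathcal{B}}(C)$ is empty-indexed. It should be read as $g(C)$, the natural convention since nothing is protected. The inclusion then still holds, because every term $g(C\setminus A)$ lies in $g(C)$ by monotonicity. If $\mathcal{B}\neq\emptyset$, richness forces $\mathcal{A}\neq\emptyset$, so no further issue arises. No other obstacle is expected: the argument is a direct application of monotonicity and does not use preservation or idempotence.
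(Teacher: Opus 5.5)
Your proof is correct and follows the same route as the paper: for each $B\in\mathcal{B}$ pick $A_B\in\mathcal{A}$ with $B\subseteq A_B$, use monotonicity to get $g(C\setminus A_B)\subseteq g(C\setminus B)$, and intersect over $B$. Your treatment of empty collections is a harmless extra that the paper does not address.
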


To illustrate the implications of this result, considers what happens when two human creators combine their individual creations to form a coalition. An immediate corollary is that violations are \emph{superadditive}: the
violation of a union can exceed the union of their individual violations.
\begin{cor}
[Superadditivity] Fix a generator $g$
and corpus $C$. For any two protected sets $A,B\in\mathcal{C}$,
\[
v_{g}^{\left\{ A,B\right\} }\left(C\right)\subseteq v_{g}^{\left\{ A\cup B\right\} }\left(C\right)
\]
with strict inclusion possible. 
\end{cor}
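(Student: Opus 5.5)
The inclusion follows from Proposition \ref{prop:rich}, and strictness needs one explicit example.

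For the inclusion, note that the collection $\{A\cup B\}$ is richer than $\{A,B\}$. Both $A\subseteq A\cup B$ and $B\subseteq A\cup B$ hold, so every member of $\{A,B\}$ is contained in the single member of $\{A\cup B\}$. Proposition \ref{prop:rich} then gives $p_g^{\{A\cup B\}}(C)\subseteq p_g^{\{A,B\}}(C)$. Since violation sets are complements of permissible sets within the common set $g(C)$, we get $v_g^{\{A,B\}}(C)\subseteq v_g^{\{A\cup B\}}(C)$.

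The same conclusion can be checked directly from monotonicity of $g$. Because $C\setminus(A\cup B)$ is contained in both $C\setminus A$ and $C\setminus B$, we have $g(C\setminus(A\cup B))\subseteq g(C\setminus A)\cap g(C\setminus B)$.

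\textbf{Strictness.} I take the convex hull generator in $d=1$ with corpus $C=\{0,1,2\}$, $A=\{0\}$ and $B=\{2\}$.
\begin{itemize}
\item $g(C\setminus A)=[1,2]$ and $g(C\setminus B)=[0,1]$, so $p_g^{\{A,B\}}(C)=\{1\}$ and $v_g^{\{A,B\}}(C)=[0,2]\setminus\{1\}$.
\item $g(C\setminus(A\cup B))=\{1\}$, so $p_g^{\{A\cup B\}}(C)=\{1\}$ as well. This choice does not give strictness.
\end{itemize}

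The fix is to use a corpus where the point removed by both sets is redundant individually. I take $C=\{0,1,2\}$ with $A=\{0,1\}$ and $B=\{1,2\}$.
\begin{itemize}
\item $g(C\setminus A)=\{2\}$ and $g(C\setminus B)=\{0\}$, so the intersection is empty and $v_g^{\{A,B\}}(C)=[0,2]$.
\item $g(C\setminus(A\cup B))=g(\emptyset)$, which also gives violation set $[0,2]$. This is again equality.
\end{itemize}

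A cleaner route is to put two copies of each role in $\mathbb{R}$ so that each group alone is dispensable. I take $C=\{0,0.5,1,1.5\}$ with $A=\{0\}$ and $B=\{0.5\}$.
\begin{itemize}
\item $g(C\setminus A)=[0.5,1.5]$ and $g(C\setminus B)=[0,1.5]$, whose intersection is $[0.5,1.5]$. Hence $v_g^{\{A,B\}}(C)=[0,0.5)$.
\item $g(C\setminus(A\cup B))=[1,1.5]$, so $v_g^{\{A\cup B\}}(C)=[0,1)$.
\end{itemize}
Since $[0,0.5)\subsetneq[0,1)$, the inclusion is strict. The underlying reason is that the points in $[0.5,1)$ have two generative paths, one avoiding $A$ and one avoiding $B$, but no path avoiding both.
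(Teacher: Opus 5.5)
Your argument is correct and matches the paper's, which presents the inclusion as an immediate consequence of Proposition \ref{prop:rich}: $\{A\cup B\}$ is richer than $\{A,B\}$, and violation sets are complements of permissible sets within the common set $g(C)$. The paper gives no example for strictness, so your final example ($g=g_{conv}$ on $C=\{0,0.5,1,1.5\}$ with $A=\{0\}$, $B=\{0.5\}$, giving $[0,0.5)\subsetneq[0,1)$) usefully fills that gap, though in a write-up you should drop the two failed attempts.
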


Superadditivity implies that when multiple creators join a class action, their combined violation set can exceed the sum of their individual claims. This strengthens the collective bargaining position of creator coalitions and suggests that group licensing may command a premium over the sum of individual licenses.

Another implication of Proposition \ref{prop:rich} is that removing protected works from the collection naturally expands the permissible set. Combining this with Theorem \ref{thm:asym}, we have the following asymptotic result.
\begin{cor} 
Fix a convex-valued and ULS $g$, and suppose the collection $\mathcal{A}$ consists of some subset of all singleton works. If $X_i$ has uniform light tails, then a.s.
\[\lim_{n \to \infty} \frac{\mathrm{Vol}(p_{g}^{\mathcal{A}}\left(C_n\right))}{\mathrm{Vol}(g(C_n))} = 1 \]
\end{cor}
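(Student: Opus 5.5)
The corollary is a sandwich argument that combines Proposition \ref{prop:rich} with Theorem \ref{thm:asym}. Let $\mathcal{S}_n := \{\{c\} : c \in C_n\}$ denote the collection of all singleton works in the corpus $C_n$. By the definition of the general permissible set,
\[
p_g^{\mathcal{S}_n}(C_n) = \bigcap_{c \in C_n} g(C_n \setminus \{c\}) = p_g(C_n),
\]
so the singleton collection recovers the permissible set of Section \ref{sec:Framework}.

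\textbf{Richness.} Next I check that $\mathcal{S}_n$ is richer than $\mathcal{A}$, taking for $\mathcal{A}$ its trace on the corpus. Every $A\in\mathcal{A}$ is a singleton $\{c\}$. If $c\in C_n$, then $\{c\}\in\mathcal{S}_n$ and the inclusion required by richness holds. If $c\notin C_n$, then $g(C_n\setminus\{c\}) = g(C_n)$, so that set imposes no constraint and can be dropped from the intersection without changing $p_g^{\mathcal{A}}(C_n)$. Applying Proposition \ref{prop:rich} to the reduced collection gives $p_g(C_n) \subseteq p_g^{\mathcal{A}}(C_n)$. If $\mathcal{A}$ happens to be empty, then $p_g^{\mathcal{A}}(C_n)=g(C_n)$ by the convention that an empty intersection is the whole generable set, and the ratio is identically one.

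\textbf{Upper bound.} Each set in the intersection defining $p_g^{\mathcal{A}}(C_n)$ is of the form $g(C_n\setminus A)$. By monotonicity of $g$, $g(C_n\setminus A)\subseteq g(C_n)$, so $p_g^{\mathcal{A}}(C_n)\subseteq g(C_n)$ and the ratio is at most one.

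\textbf{Conclusion.} Combining the two inclusions gives
\[
r_g(C_n) \le \frac{\mathrm{Vol}(p_g^{\mathcal{A}}(C_n))}{\mathrm{Vol}(g(C_n))} \le 1 .
\]
Theorem \ref{thm:asym} gives $r_g(C_n)\to 1$ almost surely, so the squeeze yields the claim on the same probability-one event.

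The only real subtlety is measure-theoretic. The ratio needs $\mathrm{Vol}(g(C_n))>0$ eventually, and the sets involved must be measurable. Both are already implicit in Theorem \ref{thm:asym}, which I take as given. Positivity holds a.s. for $n\ge d+1$ because the density is strictly positive and $g(C_n)\supseteq \mathrm{conv}(C_n)$ when $g$ is convex-valued. Measurability follows from the assumption that $g$ maps into Borel sets.
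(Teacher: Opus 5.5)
Your proof is correct and follows the paper's own route: the paper obtains this corollary by combining Proposition \ref{prop:rich} with Theorem \ref{thm:asym}. The full singleton collection is richer than $\mathcal{A}$, which gives $p_g(C_n)\subseteq p_g^{\mathcal{A}}(C_n)\subseteq g(C_n)$, and the ratio is then squeezed to one. Your handling of singletons outside $C_n$, the empty collection, and positivity and measurability of the volumes supplies details the paper leaves implicit.
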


In other words, Theorem \ref{thm:asym} is conservative: its implication that AI generation becomes unrestricted in the long run under light-tailed innovation assumes that every work in the corpus is protected. If there are works in the corpus that are unprotected (e.g., public domain), then this renders AI generation more permissible and strengthens the asymptotic result.

A natural question is to ask what happens if we permit group violations. In this case, we conjecture that as long as the number of individuals in each group is finitely bounded, then Theorem \ref{thm:asym} would still hold. This is because as $n$ grows large, any finite number of individuals banding together as a group would still be negligible and the same asymptotic forces will prevail. On the other hand, if the group size grows unboundedly with the corpus---for instance, if it includes all works by authors who have ever contributed---the limiting behavior depends on how it grows relative to $n$. Characterizing this relationship is an open question.

\section{Conclusion and Open Questions}

Our paper characterizes the structure of permissible generation but takes as given both the generator and the distribution of creative works. A natural extension is to endogenize these objects. If creators anticipate how generative systems will use existing works, they may adjust what they produce. Analyzing equilibrium in this environment requires modeling the strategic interaction between creators, who choose where in the creative space to locate new works, and AI firms, who choose which generative capabilities to develop. One plausible conjecture is that creators concentrate effort near the frontier of the creative space rather than in its interior. Such strategic positioning would thicken the tails of the creation distribution and could sustain a larger violation set than would arise under non-strategic creation.

A second open question concerns social optimality---specifically, what level of permissiveness would we ideally sustain? At one extreme, if the permissible ratio converges to one then nearly all generative outputs are non-infringing, and this may weaken creators' ex ante incentives to produce new works. At the other extreme, a large and persistent violation set preserves creator rents but limits the social gains from generative technologies. Our framework and proposed definition provide a foundation for formalizing and evaluating these tradeoffs.

\newpage

\appendix

\section*{Appendix}

\section{Preliminary Results} \label{app:ConvexValued}

\begin{prop} \label{prop:convex_v} The following are equivalent:
\begin{enumerate}
   \item $g$ is convex-valued
   \item $g=g_{conv}\circ g$
   \item $g=g\circ g_{conv}$
\end{enumerate}
\end{prop}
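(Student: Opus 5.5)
I will prove the equivalences by a cycle of implications, using only preservation, monotonicity and idempotence of $g$ together with the facts that $g_{conv}$ is itself a generator and that $\mathrm{conv}(K)=K$ exactly when $K$ is convex.

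\emph{(1) $\Rightarrow$ (2).} If $g(C)$ is convex for every $C$, then $g_{conv}(g(C))=\mathrm{conv}(g(C))=g(C)$. Hence $g=g_{conv}\circ g$.

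\emph{(2) $\Rightarrow$ (1).} If $g=g_{conv}\circ g$, then $g(C)=\mathrm{conv}(g(C))$ for every $C$. The right-hand side is convex, so $g$ is convex-valued.

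\emph{(1) $\Rightarrow$ (3).} We need $g(\mathrm{conv}(C))=g(C)$. Preservation gives $C\subseteq \mathrm{conv}(C)$, so monotonicity yields $g(C)\subseteq g(\mathrm{conv}(C))$. For the reverse inclusion:
\begin{itemize}
\item preservation gives $C\subseteq g(C)$;
\item since $g(C)$ is convex, $\mathrm{conv}(C)\subseteq g(C)$;
\item monotonicity and idempotence then give $g(\mathrm{conv}(C))\subseteq g(g(C))=g(C)$.
\end{itemize}

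\emph{(3) $\Rightarrow$ (1).} This is the only step that is not formal, and it is where I expect the main obstacle. Suppose $g=g\circ g_{conv}$. We want $g(C)$ to be convex. Apply the hypothesis to the corpus $g(C)$ and use idempotence:
\[
g(\mathrm{conv}(g(C)))=g(g(C))=g(C).
\]
Preservation gives $\mathrm{conv}(g(C))\subseteq g(\mathrm{conv}(g(C)))$. Combining the two, $\mathrm{conv}(g(C))\subseteq g(C)$, so $g(C)$ equals its convex hull and is convex.

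The subtlety in this last step is that the identity must be applied to the corpus $g(C)$ itself. This requires $\mathcal{C}$ to contain $g(C)$ and $\mathrm{conv}(g(C))$. The first holds because $g$ maps $\mathcal{C}$ into $\mathcal{C}$. For the second, the convex hull of a Borel set is analytic and possibly not Borel, so it may not lie in $\mathcal{C}$. I would handle this by noting that the paper implicitly treats $g_{conv}$ as a map on $\mathcal{C}$, which already presupposes $\mathrm{conv}(C)\in\mathcal{C}$. Alternatively, if needed, one can enlarge $\mathcal{C}$ to all subsets of $\mathbb{R}^d$.
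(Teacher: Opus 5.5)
Your proof is correct and follows essentially the same route as the paper: (1)$\Leftrightarrow$(2) via $\mathrm{conv}(K)=K$ iff $K$ is convex, (1)$\Rightarrow$(3) via $\mathrm{conv}(C)\subseteq g(C)$ followed by monotonicity and idempotence, and (3)$\Rightarrow$(1) by applying the hypothesis to the corpus $g(C)$ to get $\mathrm{conv}(g(C))\subseteq g(g(C))=g(C)$. Your caveat about whether $\mathrm{conv}$ preserves Borel measurability is a fair one; the paper leaves it implicit by treating $g_{conv}$ as a map $\mathcal{C}\to\mathcal{C}$.
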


\begin{proof}
We will first show that $g$ is convex-valued if and only if $g=g_{conv}\circ g$.
In one direction, when $g$ is convex-valued then $g\left(C\right)$
is convex, implying $\text{conv}\left(g(C)\right)=g(C)$
as desired. In the other direction, if $g(C)=\mbox{conv}(g(C))$ then $g(C)$ is clearly convex.

We now show that $g$ is convex-valued if and only if $g=g\circ g_{conv}$. First
suppose $g$ is convex valued. Then since $C \subseteq g_{conv}(C)$ we have 
\begin{equation} \label{eq:gCsmaller}
g(C) \subseteq g(g_{conv}(C))
\end{equation} by monotonicity of $g$. Moreover,
\begin{align*}
    g_{conv}(C) & \subseteq g_{conv}(g(C)) && \mbox{since $C \subseteq g(C)$ by containment} \\
    & = g(C) && \mbox{since $g = g_{conv}\circ g$}
\end{align*}
Applying the mapping $g$ to both sides of this set inclusion yields
\begin{equation} \label{eq:gCbigger}
g(g_{conv}(C)) \subseteq g(g(C)) = g(C)
\end{equation}
where the last equality follows from idempotency. Combining (\ref{eq:gCsmaller}) and (\ref{eq:gCbigger}) yields 
$g\left(g_{conv}\left(C\right)\right)=g\left(C\right)$ as desired.

For the
converse, suppose $g=g\circ g_{conv}$. Then
\begin{align*}
    g_{conv}\left(g\left(C\right)\right) & \subseteq g( g_{conv}\left(g\left(C\right)\right)) && \mbox{by containment} \\
    & =g\left(g\left(C\right)\right) && \mbox{by assumption that $g=g \circ g_{conv}$} \\
    & = g(C) && \mbox{by idempotence} \\
    & \subseteq g_{conv}(g(C)) && \mbox{by containment} 
\end{align*}
so $g=g_{conv}\circ g$, which directly implies that $g$ is convex-valued.
\end{proof}

\begin{cor} \label{corr:CHminimal}
The convex hull generator is minimal among all convex-valued generators, i.e.,
\[
g_{conv}\left(C\right)\subseteq g\left(C\right)
\]
for all convex-valued generators $g$.
\end{cor}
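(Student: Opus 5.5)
The plan is to derive the corollary directly from Proposition \ref{prop:convex_v}, using the property that characterizes it: every convex-valued generator $g$ satisfies $g = g\circ g_{conv}$, equivalently $g = g_{conv}\circ g$. The argument is essentially a two-line chain of inclusions that uses preservation and convexity.

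Fix a convex-valued generator $g$ and a corpus $C$. I will use the characterization $g = g_{conv}\circ g$ from Proposition \ref{prop:convex_v}. Preservation gives $C\subseteq g(C)$. Monotonicity of the convex hull generator (it is itself a generator) then yields $g_{conv}(C)\subseteq g_{conv}(g(C))$. By Proposition \ref{prop:convex_v}, the right-hand side equals $g(C)$, and therefore $g_{conv}(C)\subseteq g(C)$.

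An alternative route avoids the proposition: since $g(C)$ is a convex set containing $C$, and $\mathrm{conv}(C)$ is the smallest convex set containing $C$, the inclusion follows at once. Alternatively, using $g = g\circ g_{conv}$ together with preservation, $g_{conv}(C)\subseteq g(g_{conv}(C)) = g(C)$.

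I do not expect a real obstacle here. The only point to check is that $g_{conv}$ qualifies as a generator, so that monotonicity may be invoked; this is immediate from standard properties of convex hulls. The second route sidesteps even that check.
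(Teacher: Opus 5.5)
Your proposal is correct. The paper's proof is exactly your last alternative: $g_{conv}(C)\subseteq g(g_{conv}(C)) = g(C)$, using preservation and the characterization $g = g\circ g_{conv}$ from Proposition \ref{prop:convex_v}. Your primary route, which uses the easy half $g = g_{conv}\circ g$ (i.e.\ just convexity of $g(C)$) together with monotonicity of $g_{conv}$, is equally valid and amounts to your ``smallest convex superset'' argument.
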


\begin{proof} For any convex-valued generator $g$, 
\[g_{conv}(C) \subseteq g(g_{conv}(C)) = g(C)\]
where the set inclusion follows by containment and the inequality follows from Proposition \ref{prop:convex_v}.
\end{proof}

\section{Proofs of Results in Section \ref{sec:Properties}}

\subsection{Proof of Proposition \ref{prop:InternalStructure}}

\emph{Part (a):} Let $C\subseteq C'$ and without loss write $C' = C \cup A$ where $C$ and $A$ are disjoint sets. Then
\begin{align*}
    p_{g}(C \cup A) & = \bigcap_{c \in C\cup A} p_{g}(c,C \cup A) \\
    & = \bigcap_{c \in C\cup A} g((C\cup A)\backslash \{c\}) \\
    & = \bigcap_{c \in C} g((C\cup A)\backslash \{c\}) \cap \bigcap_{c' \in A} g((C\cup A)\backslash \{c'\})
\end{align*}
while 
\[
p_{g}\left(C\right)=\bigcap_{c\in C}g\left(C\backslash \{c\}\right)
\]
Fix an arbitrary $c\in C$. Then
$C \backslash \{c\} \subseteq \left(C \cup A\right)\backslash \{c\}$, so 
\[g\left(C \backslash \{c\}\right) \subseteq g\left(\left(C \cup A\right)\backslash \{c\}\right)\]
by monotonicity of $g$. Moreover, $C\backslash \{c\} \subseteq C \subseteq \left(C\cup A\right)\backslash \{c'\}$ for every $c' \in A$, so also
\[g\left(C\backslash \{c\}\right)\subseteq g\left(\left(C\cup A\right)\backslash \{c'\}\right).\]
Thus $p_{g}\left(C\right) \subseteq p_{g}(C \cup A) = p_{g}(C')$ as desired. \\

\emph{Part (b):} In one direction, $p_{g}(C) \subseteq g(p_{g}(C))$ by containment. In the other direction, define $E_c := g(C \backslash \{c\})$ for each $c \in C$, so that $p_{g}(C) = \bigcap_{c \in C} E_c$. Fix $c' \in C$. Since
$\bigcap_{c \in C} E_c \subseteq E_{c'},$  monotonicity of $g$ implies
\[g\left(\bigcap_{c \in C} E_c\right) \subseteq g\left(E_{c'}\right)\]
for each $c' \in C$. So
\begin{equation} \label{eq:gInclusion} g\left(\bigcap_{c \in C} E_c\right) \subseteq \bigcap_{c'\in C} g\left(E_{c'}\right)
\end{equation}
But for each $c'\in C$,
\[g(E_{c'}) = g(g(C \backslash \{c'\}))=g(C \backslash \{c'\})\]
where the second equality follows from idempotence. So
\[\bigcap_{c'\in C} g\left(E_{c'}\right) = \bigcap_{c' \in C} g(C \backslash \{c'\}) = p_{g}(C).\]
Combining this with (\ref{eq:gInclusion}) yields 
\[g(p_{g}(C)) = g\left(\bigcap_{c \in C} E_c\right) \subseteq p_{g}(C)\]
completing the proof.

\section{Proof of Proposition \ref{prop:radon}}

Since $|C| \geq R(g)$, by definition there exist disjoint subsets $A, B \subseteq C$ such that $g(A) \cap g(B) \neq \emptyset$. Let $x$ be an element in this intersection. We show that $x \in p_{g}(C)$.

Recall that $p_{g}(C) = \bigcap_{c \in C} g(C \setminus \{c\})$. Fix any arbitrary existing creation $c \in C$. Since $A$ and $B$ are disjoint, $c$ cannot belong to both sets. This leads to two cases:
\begin{enumerate}
    \item If $c \notin A$, then $A \subseteq C \setminus \{c\}$. By monotonicity, $g(A) \subseteq g(C \setminus \{c\})$. Since $x \in g(A)$, it follows that $x \in g(C \setminus \{c\})$.
    \item If $c \notin B$, then $B \subseteq C \setminus \{c\}$. By monotonicity, $g(B) \subseteq g(C \setminus \{c\})$. Since $x \in g(B)$, it follows that $x \in g(C \setminus \{c\})$.
\end{enumerate}
In either case, $x$ remains generable after the removal of $c$. Since this holds for all $c \in C$, we conclude that $x \in p_{g}(C)$.

\section{Proof of Corollary \ref{corr:radon}}

Assume $|C| \geq d+2$. By Radon's theorem, there exist disjoint subsets $A, B \subseteq C$ whose convex hulls intersect, i.e., $g_{conv}(A) \cap g_{conv}(B) \neq \emptyset$. By Proposition \ref{prop:radon}, the permissible set for the convex hull generator is non-empty, $p_{g_{conv}}(C) \neq \emptyset$.

Recall from Corollary \ref{corr:CHminimal} that $g_{conv}$ is minimal among convex-valued generators, meaning $g_{conv}(C) \subseteq g(C)$ for any corpus $C$. This inclusion extends to the permissible sets:
\[
p_{g_{conv}}(C) = \bigcap_{c \in C} g_{conv}(C \setminus \{c\}) \subseteq \bigcap_{c \in C} g(C \setminus \{c\}) = p_{g}(C).
\]
Since $p_{g_{conv}}(C)$ is non-empty, it follows that $p_{g}(C)$ is non-empty as well.

\section{Proof of Proposition \ref{prop:CompStatics}}

\emph{Part (a):} Choose some $c'\in C$. Since by assumption $c$ is permissible, 
\begin{equation} \label{eq:cIng}
c \in p_{g}\left(C\right)\subseteq g\left(C\backslash c'\right)
\end{equation}
i.e., $c$ is generable from $C$ without $c'$. Moreover,
\begin{align*}
\left(C\cup \{c\}\right)\backslash c' & =\left(C\backslash c'\right)\cup \{c\} \\
& \subseteq \left(C\backslash c'\right)\cup g\left(C\backslash c'\right) && \mbox{by (\ref{eq:cIng})}\\
 & \subseteq  g\left(C\backslash c'\right)\cup g\left(C\backslash c'\right) && \mbox{by containment}\\
  & =g\left(C\backslash c'\right)
\end{align*}

Applying $g$ to both sides yields
\begin{equation} \label{eq:gSimplify}
g\left(\left(C\cup \{c\}\right)\backslash c'\right)  \subseteq g\left(g\left(C\backslash c'\right)\right)=g\left(C\backslash c'\right)
\end{equation}
where the set inclusion follows from monotonicity of $g$, and the final equality follows from idempotence.

Thus, 
\begin{align*}
p_{g}\left(C\cup \{c\}\right) & = g(C) \cap\bigcap_{c'\in C}g\left(\left(C\cup \{c\}\right)\backslash c'\right)\\
 & \subseteq g(C)  \cap\bigcap_{c'\in C}g\left(C\backslash c'\right) && \mbox{by (\ref{eq:gSimplify})}\\
 & = \bigcap_{c'\in C}g\left(C\backslash c'\right) = p_{g}(C)
\end{align*}
as desired.

\emph{Part (b):} Let $c\in v_{g}\left(C\right)$ so $c\not\in p_{g}\left(C\right)$.
Since $g$ is monotone, $p_{g}\left(C\right)\subseteq p_{g}\left(C\cup \{c\}\right)$ by Part (a) of Proposition \ref{prop:InternalStructure}. We will show that $c\in p_{g}\left(C\cup \{c\}\right)$, which then implies that $p_{g}\left(C\cup \{c\}\right)$ is strictly larger.

Suppose otherwise. Then there must be some $c'\in C$ such that 
\begin{equation} \label{eq:cInset}
c\in v_{g}\left(c',C\cup \{c\}\right)=g\left(C\cup \{c\}\right) -  g\left(\left(C\cup \{c\}\right)\backslash \{c'\}\right)
\end{equation}
First suppose $c\neq c'$. Then 
\[
c\in\left(C\cup \{c\}\right)\backslash \{c'\}\subseteq g\left(\left(C\cup \{c\}\right)\backslash \{c'\}\right)
\]
where the set inclusion follows from containment. But this contradicts (\ref{eq:cInset}).

Suppose instead $c = c'$. Then 
\[
c\in v_{g}\left(c,C\cup \{c\}\right)=g\left(C\cup \{c\}\right) -g(C)
\]
so $c\not\in g\left(C\right)$ yielding a contradiction with our initial assumption that $c \in v_{g}(C)$.

\emph{Part (c):} The set inclusion $p_{g}(C) \subseteq p_{g}(C \cup \{c\})$ is a direct consequence of Part (a) of Proposition \ref{prop:InternalStructure}, and the possibility of both equality and strict inclusion is shown in Example \ref{ex:BothPossible}.

\section{Proof of Theorem \ref{thm:asym}}

The proof proceeds in the following steps. We first establish two key intermediate results. Lemma \ref{lem:subset} shows that almost surely, for all sufficiently large $n$, each  new point $X_n$ lies inside a slight radial expansion of the convex hull of the previous points $X_1, \dots, X_{n-1}$.  Lemma \ref{lem:all_i} shows that if removing the \emph{last} prior observation shrinks the convex hull by at most a multiplicative factor $1+\varepsilon$, then the same is true for removing \emph{any} prior observation.

Throughout this proof, let $U$ denote the unit sphere in $\mathbb{R}^{d}$ and let $\lambda$ denote the surface area measure on $U$. Recall that $C_n = \{X_1, \dots, X_n\}$ is the size-$n$ corpus, and define $K_n=\mbox{conv}(C_n)$ to be its convex hull.

\begin{lem}
\label{lem:subset} For any $\varepsilon>0$, almost surely there exists $N<\infty$ such that for all $n\ge N$,
\[
X_n \in (1+\varepsilon)K_{n-1}.
\]

\end{lem}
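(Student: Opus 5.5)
The plan is a Borel--Cantelli argument. Let $A_n$ be the event that $X_n \notin (1+\varepsilon)K_{n-1}$. If we show $\sum_n P(A_n)<\infty$, then almost surely only finitely many $A_n$ occur, which is the claim. Since the $X_i$ are i.i.d., condition on $C_{n-1}$: $P(A_n)=E[\mu(\mathbb{R}^d\setminus(1+\varepsilon)K_{n-1})]$, where $\mu$ is the law of $X$. We therefore need the mass outside a slight dilation of the sample hull to be summably small.

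\textbf{Step 1: the hull contains a large ball, in a direction-uniform way.} Let $a_n$ be a radial quantile level, for instance $P(\|X\|\ge a_n)=(\log n)^2/n$. Cover $U$ by finitely many caps $S_1,\dots,S_m$ of small angular radius $\eta$, where $\eta$ depends only on $\varepsilon$ and $d$. By condition (2) of uniform light tails, the conditional angular density given $\|X\|\ge a_n$ is at least $c$. Hence each cone-region $\{x:\|x\|\ge a_n,\ x/\|x\|\in S_j\}$ has probability at least $c\lambda(S_j)(\log n)^2/n$. So the probability that some region receives no sample among $X_1,\dots,X_{n-1}$ is at most $m\exp(-c'(\log n)^2)$, which is summable. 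On the complementary event, $K_{n-1}$ contains a point in every cap at radius at least $a_n$. By an elementary spherical-geometry estimate, for small $\eta$ this gives $K_{n-1}\supseteq (1-\kappa(\eta))a_n B$, where $B$ is the unit ball and $\kappa(\eta)\to0$ as $\eta\to0$. Choose $\eta$ so that $(1+\varepsilon)(1-\kappa)\ge 1+\varepsilon/2$. Then $(1+\varepsilon)K_{n-1}\supseteq(1+\varepsilon/2)a_nB$.

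\textbf{Step 2: light tails make the mass beyond $(1+\varepsilon/2)a_n$ summable.} We need $\sum_n P(\|X\|\ge(1+\varepsilon/2)a_n)<\infty$. Write $\bar F(t)=e^{-H(t)}$, where $H$ is the cumulative hazard. An increasing hazard makes $H$ convex with $H(0)=0$, so $H((1+\delta)t)\ge(1+\delta)H(t)$. Therefore
\[
\bar F((1+\delta)a_n)\le \bar F(a_n)^{1+\delta}=\left(\frac{(\log n)^2}{n}\right)^{1+\delta},
\]
which is summable. Adding the failure probability from Step 1 gives $\sum_n P(A_n)<\infty$, and Borel--Cantelli finishes the proof. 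Note that condition (2) implicitly requires the angular law to have a density with respect to $\lambda$, which is what makes the cap probabilities comparable to $\lambda(S_j)$.

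\textbf{Main obstacle.} The delicate step is the geometric claim in Step 1: that hitting every cap at radius at least $a_n$ forces the hull to contain $(1-\kappa)a_nB$ with $\kappa$ independent of how far out the points lie. I would prove it with a supporting-halfspace argument. Take any unit direction $u$. Some cap lies within angle $\eta$ of $u$, and its sample point $y$ satisfies $\langle y,u\rangle\ge \|y\|\cos\eta\ge a_n\cos\eta$. So the support function of $K_{n-1}$ is at least $a_n\cos\eta$ in every direction. A convex set whose support function is everywhere at least $r$ contains $rB$, provided it contains the origin, which also follows here since $K_{n-1}$ has points in every direction. This gives $\kappa=1-\cos\eta$.

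A secondary point is that $a_n\to\infty$ is not actually needed, only the quantile choice. If $\|X\|$ has bounded support, then $\bar F$ eventually vanishes, and the same estimates still hold trivially.
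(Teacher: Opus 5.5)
Your proof is correct and follows essentially the paper's route. The ingredients match: a radial quantile level set by $\bar F$; a finite cap cover of $U$, together with condition (2) and Borel--Cantelli, to show that $K_{n-1}$ reaches nearly that radius in every direction; and the light-tail bound $\bar F((1+\delta)t)\le \bar F(t)^{1+\delta}$ with Borel--Cantelli to control $\|X_n\|$. Your variations are cosmetic: the level $(\log n)^2/n$ instead of $K\log n/n$, a single Borel--Cantelli on $A_n$, and an inscribed ball instead of a direct support-function comparison. One harmless slip: if $u$ lies in a cap of angular radius $\eta$, a sample direction in that cap is only guaranteed to be within angle $2\eta$ of $u$. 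So $\kappa=1-\cos(2\eta)$ rather than $1-\cos\eta$, matching the paper's $\cos(2\theta)$, and since $\eta$ is arbitrary this changes nothing.
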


\begin{lem}
\label{lem:all_i} Suppose that $(Z_i)_{i \geq 1}$ is a sequence of positive random variables such that:
\begin{enumerate}
    \item For any $\varepsilon>0$, almost surely there exists $N(\varepsilon)$ such that $Z_n \le (1+\varepsilon)\max_{i\le n-1} Z_i$ for all $n \ge N(\varepsilon)$.
    \item $\max_{i \le n} Z_i \to \infty$ almost surely as $n \to \infty$.
\end{enumerate}
Then almost surely there exists $\widetilde{N}(\varepsilon)$ such that for all $n \ge \widetilde{N}(\varepsilon)$ and all $j\in\{1,\dots,n\}$,
\[
\max_{i\le n} Z_i \le (1+\varepsilon)\max_{i\in\{1,\dots,n\}\setminus\{j\}} Z_i.
\]
\end{lem}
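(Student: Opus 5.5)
The argument is deterministic on a fixed sample path. I will fix $\varepsilon>0$ and work on the almost-sure event on which both hypotheses (1) and (2) hold. Write $M_n:=\max_{i\le n}Z_i$ and $N=N(\varepsilon)$ for the threshold from hypothesis (1). The key observation is that removing an index $j$ can only lower the maximum when $j$ is the (or an) index attaining $M_n$. For every other $j$ the inequality is trivial, since $\max_{i\ne j}Z_i=M_n\le(1+\varepsilon)M_n$.

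\textbf{Construction of $\widetilde N(\varepsilon)$.} By hypothesis (2), $M_n\to\infty$, so $M_n>M_{N-1}$ for all sufficiently large $n$ (this uses only that $M_{N-1}$ is a finite number). I will define $\widetilde N(\varepsilon)$ as the first $n\ge N$ with $M_n>M_{N-1}$. Since $M_n$ is nondecreasing, the strict inequality $M_n>M_{N-1}$ then holds for every $n\ge\widetilde N$.

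\textbf{The main inequality.} Fix $n\ge\widetilde N$ and let $k\le n$ be any index with $Z_k=M_n$. Since $Z_k=M_n>M_{N-1}$, the index $k$ cannot lie in $\{1,\dots,N-1\}$, so $k\ge N$. Also $k\ge 2$, so $M_{k-1}$ is defined: if $N=1$, then $Z_1=M_n=Z_k$ would force every maximizer to satisfy $Z_k\le(1+\varepsilon)M_{k-1}$ only when $k\ge2$, and if $k=1$ were the unique maximizer we would have $M_n=Z_1=M_{N-1}$ under the convention $M_0=0$, contradicting positivity; to avoid this edge case I will simply take $N\ge2$ without loss of generality. Hypothesis (1) applied at time $k$ gives $Z_k\le(1+\varepsilon)M_{k-1}$. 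Since $\{1,\dots,k-1\}\subseteq\{1,\dots,n\}\setminus\{k\}$, we get
\[
M_n = Z_k \le (1+\varepsilon)M_{k-1}\le(1+\varepsilon)\max_{i\in\{1,\dots,n\}\setminus\{k\}}Z_i .
\]
Combined with the trivial case $j\ne k$, this gives the claim for every $j\in\{1,\dots,n\}$. Ties cause no difficulty, because the argument works for whichever maximizer is removed.

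\textbf{Main obstacle.} The only genuinely delicate point is ensuring that the maximizer at time $n$ arrived after the threshold $N(\varepsilon)$. Hypothesis (1) controls only late arrivals, so an early, huge observation could otherwise remain the maximum forever and be essential. Hypothesis (2) is exactly what rules this out, and the construction of $\widetilde N$ above is where it enters. Finally, intersecting the probability-one events over rational $\varepsilon$, or simply for the given $\varepsilon$, yields the almost-sure statement.
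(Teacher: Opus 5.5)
Your proof is correct and is essentially the paper's argument, stated directly rather than by contradiction. The paper assumes infinitely many violating pairs $(k, j_k)$ and uses hypothesis (1) to force $j_k$ below the threshold $N$. It then contradicts hypothesis (2) via $\max_{i\le N} Z_i \ge Z_{j_k} > (1+\varepsilon)\max_{N\le i\le k} Z_i$. This is the same observation as yours: once $M_n > M_{N-1}$, every maximizer has index at least $N$ and is controlled by (1).
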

\bigskip

We are now ready to complete the proof.  By Lemma \ref{lem:subset},
 for any $\frac{\delta}{1-\delta}>0$ there is an a.s.
event such that for all sufficiently large $n$
\begin{alignat*}{1}
X_{n} & \in\left(1+\frac{\delta}{1-\delta}\right)K_{n-1}
\end{alignat*}
Equivalently, for all $u\in U$
\begin{alignat*}{1}
u\cdot X_{n} & \leq\left(1+\frac{\delta}{1-\delta}\right)\max_{i\leq n-1}u\cdot X_{i}
\end{alignat*}

Let $\sigma\left(\cdot,u\right)$ denote the support function in direction $u$. Then almost surely for all sufficiently large $n$, all $u\in U$, and all
$j\in\{1,\dots,n\}$,
\begin{align*}
\sigma\!\left((1-\delta)\operatorname{conv}(C_n),u\right)
&= (1-\delta)\,\sigma(C_n,u)
&& \mbox{by homogeneity of $\sigma$} \\
&= (1-\delta)\max_{i\le n} u\cdot X_i
&&  \\
&\le \max_{i\in\{1,\dots,n\}\setminus\{j\}} u\cdot X_i
&& \mbox{by Lemma~\ref{lem:all_i}} \\
&= \sigma(C_n\setminus\{X_j\},u)
&&  \\
&= \sigma\!\left(\operatorname{conv}(C_n\setminus\{X_j\}),u\right)
&& \mbox{since $\sigma(\operatorname{conv}(\cdot),u)=\sigma(\cdot,u)$.}
\end{align*}

\noindent  Equivalently,
$\left(1-\delta\right)\text{conv}\left(C_n\right)\subseteq\text{conv}\left(C_n\backslash X_{j}\right)$, which implies 
\begin{alignat*}{1}
g\left(\left(1-\delta\right)\text{conv}\left(C_n\right)\right) & \subseteq g\left(\text{conv}\left(C_n\backslash X_{j}\right)\right)
\end{alignat*}
by monotonicity of $g$, and finally that
\[g\left(\left(1-\delta\right)C_n\right) \subseteq g\left(\left(C_n\backslash X_{j}\right)\right)\]
by convex-valuedness of $g$ (see Proposition \ref{prop:convex_v}). Intersecting over $j$ yields
\[
g\!\left((1-\delta)C_n\right)
\subseteq
\bigcap_{j=1}^n g\!\left(C_n\setminus\{X_j\}\right)
=
p_{g}(C_n).
\]
Moreover, $p_{g}(C_n)\subseteq g(C_n)$ by definition, and 
therefore,
\[g\!\left((1-\delta)C_n\right)\subseteq p_{g}(C_n)\subseteq g(C_n).\]

This implies that
\[\text{vol}\left(g\left(\left(1-\delta\right)C_n\right)\right)  \leq\text{vol}\left(p_{g}\left(C_n\right)\right)\leq\text{vol}\left(g\left(C_n\right)\right)\]
Dividing through by $\text{vol}(g(C_n))$ (and noting $\text{vol}(g(C_n))>0$ for large $n$), we also have
\[\frac{\text{vol}\left(g\left(\left(1-\delta\right)C_n\right)\right)}{\text{vol}\left(g\left(C_n\right)\right)}  \leq\frac{\text{vol}\left(p_{g}\left(C_n\right)\right)}{\text{vol}\left(g\left(C_n\right)\right)}\leq1\]

Consider some $\varepsilon>0$. By assumption, $g$ is ULS and moreover  $0\in\text{conv}\left(C_n\right)$ for large $n$ without loss. 
We can then find some $\delta$ such that
\[
\left(1-\varepsilon\right)g\left(\text{conv}\left(C_n\right)\right)  \subseteq g\left(\left(1-\delta\right)\text{conv}\left(C_n\right)\right)\]
\noindent Since $g$ is convex-valued, we also have
\[\left(1-\varepsilon\right)g\left(C_n\right)  \subseteq g\left(\left(1-\delta\right)C_n\right)
\]
by Proposition~\ref{prop:convex_v}. This implies that 
\[
\left(1-\varepsilon\right)^d\text{vol}\left(g\left(C_n\right)\right)=\text{vol}\left(\left(1-\varepsilon\right)g\left(C_n\right)\right)\leq\text{vol}\left(g\left(\left(1-\delta\right)C_n\right)\right)
\]
so
\[
(1-\varepsilon)^d\leq\frac{\text{vol}\left(g\left(\left(1-\delta\right)C_n\right)\right)}{\text{vol}\left(g\left(C_n\right)\right)}\leq\frac{\text{vol}\left(p_{g}\left(C_n\right)\right)}{\text{vol}\left(g\left(C_n\right)\right)}
\]
Taking $\varepsilon\rightarrow0$ yields the result.

\subsection{Proof of Lemma \ref{lem:subset}}

Since $K_{n-1}=\mbox{Conv}(X_1,\dots,X_{n-1})$ is (by definition) a convex set, it suffices to control the support function in every direction. That is, we will prove that for every $u\in U$,
\[u\cdot X_n \;\le\; (1+\varepsilon)\max_{i\le n-1} u\cdot X_i.\]
We separate this argument into two steps. First, we control the magnitude of the new draw $X_n$, showing that its norm is eventually bounded by a radius $r_n$ determined by the tail behavior of $\| X \|$. Second, we control the geometry of the existing convex hull $K_{n-1}$ at this same scale, showing that the hull already extends a  distance comparable to $r_n$ in every direction.  Combining these two facts implies that no new point of magnitude comparable to $r_n$ can exceed the previous maximum in any direction, yielding $X_n\in(1+\varepsilon)K_{n-1}$ for all sufficiently large $n$.

\subsubsection{Controlling the magnitude of the new draw $X_n$} Fix an arbitrary constant $K>0$ (to be chosen later). Let $F$ be the cdf of $\left|X\right|$ and $\bar{F}:=1-F$. For all $n$ such that $K\frac{\log n}{n}\in(0,1]$, define $r_n\in\mathbb{R}_+$ by
\[
\bar F(r_n)=K\frac{\log n}{n}.
\]
Such an $r_n$ exists by continuity of $\bar F$.

\begin{lem} \label{lem:MagnitudeBound} Almost surely for large $n$,
\[
\left|X_{n}\right|\leq\left(1+\delta\right)r_{n}
\]
\end{lem}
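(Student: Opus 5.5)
The plan is a direct Borel--Cantelli argument, with the light-tail assumption entering through log-concavity of the survival function $\bar F$. Write $H(t) := -\log \bar F(t) = \int_0^t h(s)\,ds$ for the cumulative hazard of $\|X\|$. Since $X$ has a density, $\|X\|>0$ a.s., so $\bar F(0)=1$ and $H(0)=0$. Condition (1) of uniform light tails says the hazard rate $h$ is increasing. Hence $H$ is convex on $\mathbb{R}_+$.

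First I will use convexity together with $H(0)=0$ to get superlinear scaling. For any $\lambda \ge 1$ and $t\ge 0$, write $t = \tfrac{1}{\lambda}(\lambda t) + (1-\tfrac1\lambda)\cdot 0$. Convexity then gives $H(t) \le \tfrac1\lambda H(\lambda t)$, that is, $H(\lambda t)\ge \lambda H(t)$. Taking $\lambda = 1+\delta$ and $t=r_n$ yields
\[
\bar F\big((1+\delta)r_n\big) = e^{-H((1+\delta)r_n)} \le e^{-(1+\delta)H(r_n)} = \bar F(r_n)^{1+\delta} = \Big(K\frac{\log n}{n}\Big)^{1+\delta}.
\]

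Next I apply this bound to the events $E_n := \{\|X_n\| > (1+\delta) r_n\}$, which are defined for all $n$ large enough that $K\log n/n \le 1$. Because $X_n$ has the same law as $X$, we have $\mathbb{P}(E_n) = \bar F((1+\delta)r_n) \le (K\log n/n)^{1+\delta}$. Since $\sum_n (\log n)^{1+\delta} n^{-1-\delta}<\infty$ for every $\delta>0$, the first Borel--Cantelli lemma applies; independence is not even needed. It shows that almost surely only finitely many $E_n$ occur, so $\|X_n\|\le (1+\delta)r_n$ for all sufficiently large $n$, as claimed. This holds for any fixed $\delta>0$ and any fixed $K>0$, which leaves $K$ free to be chosen later when controlling the geometry of $K_{n-1}$.

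The only delicate point is the step $H(\lambda t)\ge\lambda H(t)$. It requires interpreting ``light tails'' as a monotone (increasing) hazard so that $H$ is convex, and it requires $H(0)=0$, which the positive density guarantees. Without this, $\bar F((1+\delta)r_n)$ need not be a strictly higher power of $\bar F(r_n)$, and $K\log n/n$ alone is not summable. Everything else is routine.
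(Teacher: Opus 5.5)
Your proof is correct and is essentially the paper's argument: the paper uses concavity of $\phi(t)=\log\bar F(t)$ with $\phi(0)=0$ (your convexity of $H=-\phi$) to get $\bar F((1+\delta)r_n)\le \bar F(r_n)^{1+\delta}\le K^{1+\delta}(\log n/n)^{1+\delta}$, and then concludes by the first Borel--Cantelli lemma. Your explicit justification of $H(0)=0$ and of the scaling step $H(\lambda t)\ge \lambda H(t)$ spells out details the paper leaves implicit.
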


\begin{proof}
Fix $\delta>0$ and write $\bar F(t)=\mathbb P\{|X_n|>t\}$. By the light-tail assumption,
$\phi(t):=\log \bar F(t)$ is concave on $[0,\infty)$.\footnote{This is equivalent to a nondecreasing hazard rate
$h(t)=f(t)/\bar F(t)$, since $\phi'(t)=-h(t)$.}
Since $\phi$ is concave and $\phi(0)=\log\bar F(0)=0$, we have
$\phi((1+\delta)t)\le (1+\delta)\phi(t)$ for all $t\ge 0$. Therefore,
\begin{align*}
\mathbb P\{|X_n|>(1+\delta)r_n\}
&=\bar F((1+\delta)r_n)
= \exp\!\big(\phi((1+\delta)r_n)\big) \\
&\le \exp\!\big((1+\delta)\phi(r_n)\big)
= \left(\bar F(r_n)\right)^{1+\delta}
\le K^{1+\delta}\left(\frac{\log n}{n}\right)^{1+\delta}.
\end{align*}

\noindent Thus,
\[
\sum_{n}\mathbb{P}\left\{ \left|X_{n}\right|>\left(1+\delta\right)r_{n}\right\} \leq K^{1+\delta}\sum_{n}\left(\frac{\log n}{n}\right)^{1+\delta}<\infty
\]
as $\delta>0$. Since this converges, we have the
desired result by the first Borel-Cantelli lemma. 
\end{proof}

\subsubsection{Directional coverage at the frontier}

Let $V\subseteq U$ be a finite $\theta$-cover of the sphere, i.e., a finite set of
directions such that every $u\in U$ lies within angle $\theta$ of some $v\in V$:
\[
u\cdot v \ge \cos(\theta)\qquad\forall u\in U.
\]
Fix $v\in U$. We say that an observation $X_i$ is \emph{$v$-good at level $n$} if
\[
|X_i|\ge r_n
\quad\text{and}\quad
\frac{X_i}{|X_i|}\cdot v \ge \cos(\theta).
\]
Thus, a $v$-good observation is both large in magnitude and aligned with the
direction $v$ up to angle $\theta$.

By the uniform light-tails assumption, there exists a constant $c>0$ such that
for all $r>0$ and all $u\in U$,
\[
\mathbb P\!\left(\frac{X}{\| X \|}\cdot u \ge \cos(\theta)\,\middle|\,\| X \|\ge r\right)
\ge c\, s_u .
\]
In particular, this bound holds for all $v\in V$. Moreover, the surface area of
the spherical cap of directions within angle $\theta$ of $v$ is strictly
positive:
\[
s_v := \lambda\!\left\{u\in U : u\cdot v \ge \cos(\theta)\right\} > 0
\qquad \forall v\in V.
\]
Since $V$ is finite, we may choose a constant $K>0$ such that
\[
c s_v \ge \frac{4}{K} \qquad \forall v\in V.
\]

For each $v\in V$, let $A_{n-1}(v)$ denote the event that there exists at least one $v$-good observation among $\{X_1,\dots,X_{n-1}\}$. By independence of $X_i$,
\[
\mathbb P\!\left(A_{n-1}(v)^c\right) = (1-p_n)^{\,n-1},
\]
where
\[
p_n := \mathbb P\!\left(\| X \|\ge r_n,\ \frac{X}{\| X \|}\cdot v \ge \cos(\theta)\right),
\qquad X:=X_1.
\]

Writing $f$ for the marginal density of $\| X \|$ and $f_r$ for the conditional
density of $X/\| X \|$ given $\| X \|=r$, we have
\begin{align*}
p_n
&= \int_{r_n}^\infty
\left(\int_{u\cdot v\ge \cos(\theta)} f_r(u)\,du\right) f(r)\,dr \\
&\ge c s_v\,\mathbb P\{\| X \|\ge r_n\}
\;\ge\; c s_v K \frac{\log n}{n}
\;\ge\; 4\,\frac{\log n}{n}.
\end{align*}
It follows that
\[
\mathbb P\!\left(A_{n-1}(v)^c\right)
\le \exp\!\left(-(n-1)p_n\right)
\le \exp\!\left(-4\frac{n-1}{n}\log n\right).
\]
Since $(n-1)/n\ge 1/2$ for large $n$,
\[
\mathbb P\!\left(A_{n-1}(v)^c\right) \le n^{-2}.
\]
Thus,
\[
\sum_n \mathbb P\!\left(A_{n-1}(v)^c\right) < \infty,
\]
and the first Borel--Cantelli lemma implies that, almost surely, there exists a
finite (random) index $N_v$ such that $A_{n-1}(v)$ holds for all $n\ge N_v$.

Since $V$ is finite, we may intersect over all $v\in V$. Hence, with probability
one, there exists
\[
N := \max_{v\in V} N_v < \infty
\]
such that for all $n\ge N$ and all $v\in V$, there exists a $v$-good observation
among $\{X_1,\dots,X_{n-1}\}$.

\subsubsection{Uniform control of the support function}

We now combine directional coverage with the magnitude bound to obtain a
uniform estimate over all directions $u\in U$. We work on the intersection of
the almost-sure events established above.

Fix $u\in U$. Choose $v\in V$ such that $u\cdot v \ge \cos(\theta)$. On the
almost-sure event above, there exists a $v$-good observation $X_i$ with
$i\le n-1$. Since both $u$ and $X_i/|X_i|$ lie within angle $\theta$ of $v$, the
angle between $u$ and $X_i/|X_i|$ is at most $2\theta$, and therefore
\[
u\cdot \frac{X_i}{|X_i|} \ge \cos(2\theta).
\]
Since $|X_i|\ge r_n$ for a $v$-good observation, it follows that
\begin{equation}\label{eq:rbound}
u\cdot X_i \ge r_n \cos(2\theta).
\end{equation}

Now fix $\varepsilon>0$. Choose $\delta>0$ and $\theta\in(0,\pi/4)$ sufficiently
small so that
\begin{equation}\label{eq:parameters}
\frac{1+\delta}{\cos(2\theta)} < 1+\varepsilon.
\end{equation}
Then almost surely for all sufficiently large $n$,
\begin{align*}
u\cdot X_n
&\le |X_n| && \text{by Cauchy--Schwarz} \\
&\le (1+\delta) r_n && \text{by Lemma~\ref{lem:MagnitudeBound}} \\
&\le \frac{1+\delta}{\cos(2\theta)}\, u\cdot X_i && \text{by \eqref{eq:rbound}} \\
&\le (1+\varepsilon)\, u\cdot X_i && \text{by \eqref{eq:parameters}} \\
&\le (1+\varepsilon)\max_{j\le n-1} u\cdot X_j.
\end{align*}
Since this bound holds uniformly for all $u\in U$, we conclude that
\[X_n \in (1+\varepsilon)K_{n-1},\]
as desired.

\subsection{Proof of Lemma \ref{lem:all_i}}

By assumption (1), since $Z_{i}>0$ a.s., we have a.s.
\[
\lim\sup_{n}\frac{Z_{n}}{\max_{i\leq n-1}Z_{i}}\leq1+\varepsilon
\]
We will show that in this event,
\[
\lim\sup_{n}\left(\max_{j\in\left\{ 1,\dots,n\right\} }\frac{\max_{i\in\left\{ 1,\dots,n\right\} }Z_{i}}{\max_{i\in\left\{ 1,\dots,n\right\} \backslash j}Z_{i}}\right)\leq1+\varepsilon
\]
Suppose otherwise, so we can find an infinite sequence $k$ and $j_{k}\leq k$
such that
\begin{alignat*}{1}
\max_{i\in\left\{ 1,\dots,k\right\} }Z_{i} & >\left(1+\varepsilon\right)\max_{i\in\left\{ 1,\dots,k\right\} \backslash j_{k}}Z_{i}\\
Z_{j_{k}} & >\left(1+\varepsilon\right)\max_{i\in\left\{ 1,\dots,k\right\} \backslash j_{k}}Z_{i}\geq\left(1+\varepsilon\right)\max_{i\leq j_{k}-1}Z_{i}
\end{alignat*}
This implies that $j_{k}\leq N$. But this implies for all $k>N$,
\[
\max_{i\leq N}Z_{i}\geq Z_{j_{k}}>\left(1+\varepsilon\right)\max_{i\in\left\{ N,\dots,k\right\} }Z_{i}
\]
But by assumption (2), $\max_{i\in\{N,\dots,k\}} Z_i \to \infty$ as $k\rightarrow\infty$, yielding
a contradiction. 

\section{Proofs for Section \ref{sec:groupwise}}
\subsection{Proof of Proposition \ref{prop:groupwise}}

\noindent\emph{Part (1).}
Suppose $C \subseteq D$. For any $i\in I$,
\[
C \setminus S_i \subseteq D \setminus S_i.
\]
By monotonicity of the generator $g$,
\[
g(C \setminus S_i) \subseteq g(D \setminus S_i)
\quad \text{for all } i\in I.
\]
Taking intersections over $i$ yields
\[\bigcap_{i\in I} g(C \setminus S_i)
\subseteq
\bigcap_{i\in I} g(D \setminus S_i),\]
which is exactly $p_{g}(\mathcal{S}, C) \subseteq p_{g}(\mathcal{S}, D)$.

\medskip
\noindent\emph{Part (2).}
Let $P := p_{g}(\mathcal{S}, C)$. By definition, $P \subseteq g(P).$ To show the reverse inclusion, fix any $i\in I$. Since
\[P \subseteq g(C \setminus S_i),\]
we have
\[g(P) \subseteq g(g(C \setminus S_i)= g(C \setminus S_i)\]
where the inclusion follows from monotonicity of $g$, and the equality follows from idempotence of $g$. Intersecting over $i$ yields
\[
g(P) \subseteq \bigcap_{i\in I} g(C \setminus S_i) = P
\]
as desired.

\subsection{Proof of Proposition \ref{prop:rich}}

Since $\mathcal{A}$ is richer than $\mathcal{B}$, let $A_{B}\in\mathcal{A}$
denote the set in $\mathcal{A}$ such that $A_{B}\supseteq B$ for every $ B \in \mathcal{B}$. By
the monotonicity of $g$, we have
\[
g\left(C\backslash A_{B}\right)\subseteq g\left(C\backslash B\right)
\]
Thus,
\[
p_{g}^{\mathcal{A}}\left(C\right)=\bigcap_{A\in\mathcal{A}}g\left(C\setminus A\right)\subseteq\bigcap_{B\in\mathcal{B}}g\left(D\setminus A_{B}\right)\subseteq\bigcap_{B\in\mathcal{B}}g\left(C\setminus B\right)=p_{g}^{\mathcal{B}}\left(C\right)
\]
as desired.

\end{document}